\def\BibTeX{{\rm B\kern-.05em{\sc i\kern-.025em b}\kern-.08em
    T\kern-.1667em\lower.7ex\hbox{E}\kern-.125emX}}
\crefname{figure}{Figure}{Figures}
\crefname{assumption}{Assumption}{Assumptions}
\newcommand{\rev}[1]{#1}
\newcommand{\E}{\mathbb{E}}
\newcommand{\R}{\mathbb{R}}
\newcommand{\calX}{\mathcal{X}}
\newcommand{\calV}{\mathcal{V}}
\newcommand{\calF}{\mathcal{F}}
\newcommand{\calW}{\mathcal{W}}
\newcommand{\calU}{\mathcal{U}}
\newcommand{\calD}{\mathcal{D}}
\newcommand{\what}{\widehat{W}}
\newcommand{\calFhat}{\widehat{\mathcal{F}}}
\newcommand{\calDhat}{\widehat{\mathcal{D}}}
\newcommand{\iid}{\overset{\textup{iid}}{\sim}}
\newcommand{\wtilde}{\widetilde{W}}
\newtheorem{remark}{Remark}
\newtheorem{lemma}{Lemma}
\newtheorem{theorem}{Theorem}
\newtheorem{corollary}{Corollary}
\newtheorem{definition}{Definition}
\newtheorem{assumption}{Assumption}
\DeclareMathOperator*{\argmin}{arg\,min}
\DeclareMathOperator*{\minimize}{\mathrm{minimize}}
\DeclareMathOperator*{\subjectto}{\mathrm{subject~to}}
\begin{document} 
\title{Adaptive Robust Model Predictive Control\\ via Uncertainty Cancellation}
\author{Rohan Sinha, James Harrison, Spencer M. Richards, and Marco Pavone
\thanks{%
    The authors are with the Autonomous Systems Lab at Stanford University, Stanford, CA, \texttt{\{rhnsinha, jharrison, spenrich, pavone\}@stanford.edu}. This research was supported in part by the National Science Foundation (NSF) via Cyber-Physical Systems (CPS) award \#1931815, and the National Aeronautics and Space Administration (NASA) via University Leadership Initiative grant \#80NSSC20M0163 and via an Early Stage Innovations grant. Spencer~M.~Richards and James Harrison were also supported in part by the Natural Sciences and Engineering Research Council of Canada (NSERC). This article solely reflects our own opinions and conclusions, and not those of any NSF, NASA, or NSERC entity.
    This paper extends an earlier conference version of this work \cite{SinhaHarrisonEtAl2021}. 
}%
}
\maketitle

\begin{abstract}
We propose a learning-based robust predictive control algorithm that compensates for significant uncertainty in the dynamics for a class of discrete-time systems that are nominally linear with an additive nonlinear component. Such systems commonly model the nonlinear effects of an unknown environment on a nominal system. We optimize over a class of nonlinear feedback policies inspired by certainty equivalent "estimate-and-cancel" control laws pioneered in classical adaptive control to achieve significant performance improvements in the presence of uncertainties of large magnitude, a setting in which existing learning-based predictive control algorithms often struggle to guarantee safety. In contrast to previous work in robust adaptive MPC, our approach allows us to take advantage of structure (i.e., the numerical predictions) in the a priori unknown dynamics learned online through function approximation. Our approach also extends typical nonlinear adaptive control methods to systems with state and input constraints even when we cannot directly cancel the additive uncertain function from the dynamics. We apply contemporary statistical estimation techniques, to certify the system's safety through persistent constraint satisfaction with high probability. 
Moreover, we propose using Bayesian meta-learning algorithms that learn calibrated model priors to help satisfy the assumptions of the control design in challenging settings. Finally, we show in simulation that our method can accommodate more significant unknown dynamics terms than existing methods and that the use of Bayesian meta-learning allows us to adapt to the test environments more rapidly.
\end{abstract}

\begin{IEEEkeywords}
Adaptive Control, Machine Learning, Meta-Learning, Model Predictive Control,
Robust Control
\end{IEEEkeywords}

\section{Introduction}
\rev{
Developing control systems capable of autonomous operation in diverse, unstructured environments requires control algorithms that learn from experience. Therefore, rapid advances in machine learning algorithms (e.g., see \cite{rasmussen2006gaussian, goodfellow2016deep}) have driven a concomitant explosion in research on the use of learning algorithms to control dynamical systems (e.g., see \cite{RichardsAzizanEtAl2021, AswaniGonzalezEtAl2013, bujarbaruah_adaptive_2018, bujarbaruah_semi-definite_2020, BujarbaruahZhangEtAl2020, chowdhary_recursively_2010, joshi_asynchronous_2020, HewingKabzanEtAl2017, recht_review, lew_safe_2020, soloperto_learning-based_2018, cairano_indirect_2016, DeanManiaEtAl2018, DeanTuEtAl2019, fan_bayesian_2020, HarrisonSharmaEtAl2018, koller_learning-based_2018, mania2020active, MishraGasparino2021}). These \emph{learning-based} control algorithms leverage operational data to improve closed-loop performance, typically by refining estimates of the environments' nonlinear and a priori unknown effects on the dynamics online. 
}

\rev{
Upon deployment, these methods should provide rigorous safety guarantees while quickly adapting in the face of uncertainty. The problem of safe learning in control was initially considered in adaptive control theory, a mature discipline that has historically emphasized safety in the form of closed-loop stability guarantees \cite{Astrom2013, SlotineLi1991, IoannouSun2012}. These classical adaptive control methods have seen continued interest in recent work on learning to control for their simplicity and asymptotic convergence behavior for systems with matched uncertainty (e.g., see \cite{joshi_asynchronous_2020, ChowdharyKingraviEtAl2014, fan_bayesian_2020, BoffiTuEtAl2020, RichardsAzizanEtAl2021, OConnellShiEtAl2021}). However, these approaches generally cannot guarantee the satisfaction of constraints on states and inputs, even though this finer-grained notion of safety is essential in practice to ensure unsafe regions of the state space are avoided under limits on the control authority.
}
\begin{figure}
    \centering
    \includegraphics[width=\linewidth]{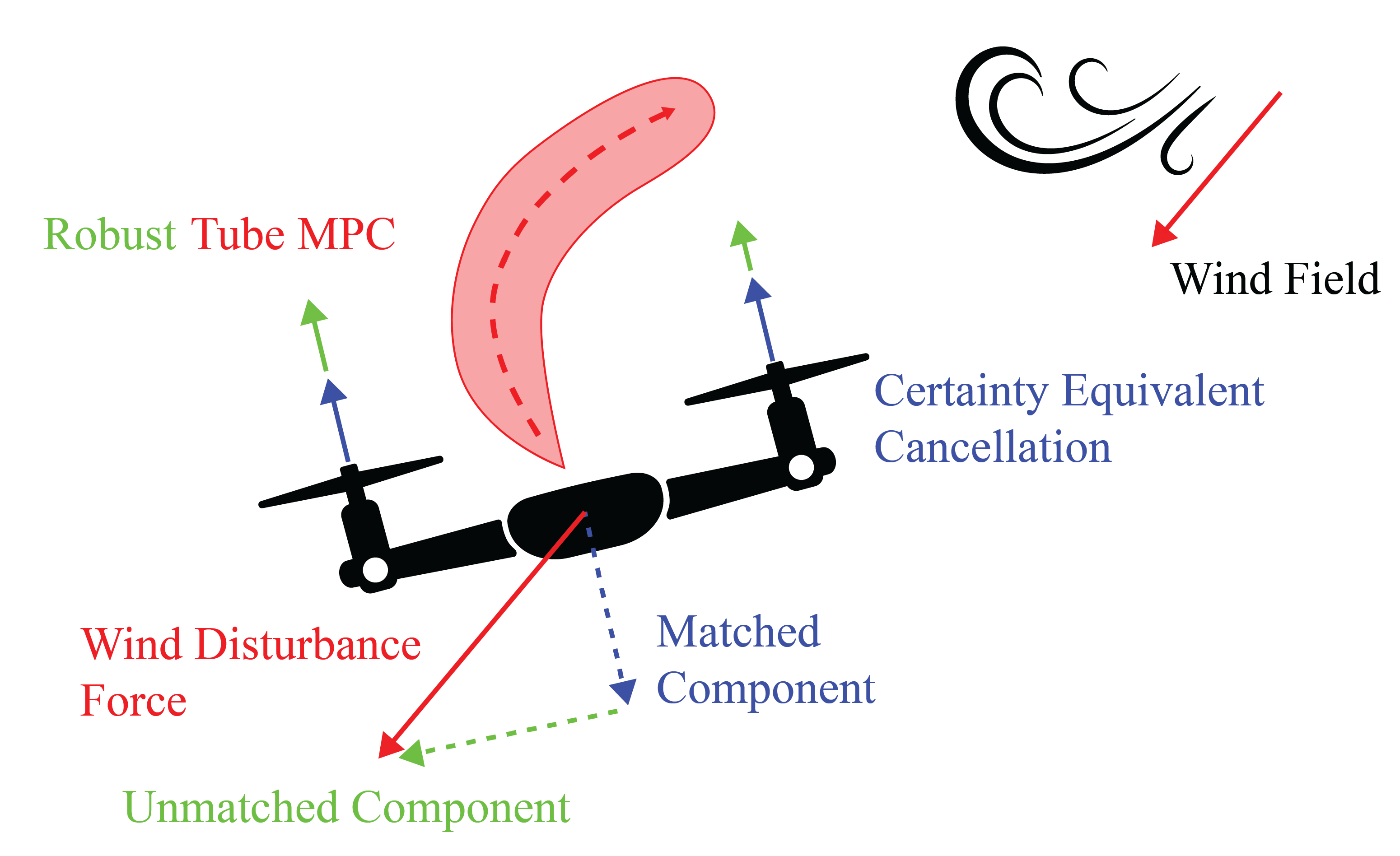}
    \caption{\rev{High-level illustration of our approach. We learn the effect of environmental uncertainty online and curb its influence through certainty equivalent adaptive control. Then, we account for any remaining uncertainty and guarantee constraint satisfaction using tube MPC.}}
    \label{fig:illustration}
\end{figure}

\rev{
In contrast, model predictive control (MPC) algorithms were developed to provide such set-avoidance guarantees \cite{BorrelliBemporadEtAl2017}. Therefore, many recently proposed learning-based control algorithms strive to integrate online learning algorithms with robust MPC strategies (e.g., see \cite{bujarbaruah_adaptive_2018, bujarbaruah_semi-definite_2020, BujarbaruahZhangEtAl2020, AswaniGonzalezEtAl2013, AswaniTomlin2012, HewingKabzanEtAl2017, lew_safe_2020, soloperto_learning-based_2018, cairano_indirect_2016, koller_learning-based_2018, KohlerAndinaEtAl2019}). However, guaranteeing constraint satisfaction then generally requires estimating uncertainty in learned quantities and propagating this uncertainty forward in time to characterize the set (or distribution) of possible trajectories when performing trajectory optimization. This is a challenging problem when we estimate the dynamics with expressive models like deep neural networks. Thus, we reach a central tension underlying modern learning-based control; we desire complex nonlinear models with the broad representational capacity necessary for autonomy in diverse and a priori unmodelled environments, but these models are not readily integrated into constrained control algorithms. This tension often results in learning-based control algorithms for constrained systems that are either too conservative (e.g., yielding limited performance to remain safe) or too fragile (e.g., infeasible in the face of considerable uncertainty).
}

\rev{
In this work, we leverage classical adaptive control techniques to reduce the over-conservatism and fragility of learning-based control algorithms for systems subject to state and input constraints.  We combine a simple nonlinear control law inspired by ``estimate-and-cancel'' methods in nonlinear adaptive control with robust MPC techniques to control a system in an uncertain environment, represented as an unknown nonlinear term in the dynamics. This strategy results in a simple control algorithm that is recursively feasible, input-to-state stable, and can safely leverage expressive nonlinear models. 
We can view our approach through the lens of both adaptive control and robust adaptive MPC. On the one hand, we extend classical adaptive cancellation-based methods to a setting with uncertain, unmatched dynamics subject to state and input constraints. On the other hand, we introduce a simple nonlinear feedback law to construct an adaptive robust MPC scheme that can reduce the conservatism of existing approaches by taking advantage of the learned structure in a priori unknown dynamics. We demonstrate on various simulated systems that our method reduces the conservatism and increases the feasible domain of the resulting robust MPC problem compared to typical adaptive robust MPC methods. 
}

\subsection{Related Work} 
\rev{We briefly review two significant paradigms for the control of uncertain systems, namely \emph{adaptive control} and \emph{robust control}. We then discuss recent works that combine ideas from both paradigms, oftentimes leveraging modern methods in machine learning.}

\subsubsection{Adaptive Control} 
\rev{Adaptive control concerns the joint design of a parametric feedback controller and a parameter adaptation law to improve closed-loop performance over time when the dynamics are partially unknown \cite{SlotineLi1991, IoannouSun2012}. Design of these components for nonlinear systems commonly relies on expressing unknown dynamics terms as linear combinations of \emph{known} basis functions, i.e., \emph{features} \cite{SlotineLi1991}. The adaptation law updates the feature weights online, and the controller applies part of the control signal to cancel the estimated term from the dynamics \cite{IoannouFidan2006, LavretskyWise2013, SlotineLi1991, Astrom2013}. These simple methods can achieve tracking convergence up to an error threshold that depends on the representation capacity of the features relative to the true dynamics~\cite{IoannouFidan2006, LavretskyWise2013}. Recent works propose combining high-capacity parametric and non-parametric models from machine learning with classical adaptive control designs. This includes deep neural networks via online back-propagation \cite{joshi_asynchronous_2020}, Gaussian processes \cite{ChowdharyKingraviEtAl2014}, and Bayesian neural networks \cite{fan_bayesian_2020, harrison_meta-learning_2020} via online Bayesian updates and meta-learned features \cite{HarrisonSharmaEtAl2018, RichardsAzizanEtAl2021}. However, these approaches are fundamentally limited by common assumptions in classical adaptive control, namely that uncertain dynamics terms can be stably canceled by the control input in their entirety, i.e., that these terms are \emph{matched uncertainties} \cite{SlotineLi1991, IoannouFidan2006, IoannouSun2012, LavretskyWise2013}. Moreover, most of these works do not consider state and input constraints, which are essential to safe control in practice. We generalize these classical adaptive methods to incorporate safety constraints even if the uncertainty is not fully matched.
}

\subsubsection{Robust Control}
\rev{Robust control seeks consistent performance despite uncertainty in the dynamics. In this work, we consider the robust control of constrained discrete-time systems using tools from predictive control. In particular, robust MPC algorithms for linear systems consider the control of a system subject to bounded noise or uncertain dynamics terms, i.e., \emph{disturbances}, as an optimization program with explicit state and input constraints. Some methods optimize the worst-case performance of the controller \cite{mayne:minmax}, while others tighten the constraints to accommodate the set of all possible trajectories induced by the disturbances and optimize the nominal predicted trajectory instead \cite{mayne_robust_2005, LimonAlamoEtAl2009}. To account for future information gain and reduce conservatism, these methods either fix a disturbance feedback policy \cite{mayne_robust_2005} or optimize over state feedback policies \cite{goulart_optimization_2006}.} 

\subsubsection{Adaptive Robust MPC (ARMPC)} 
\rev{ARMPC, often referred to as learning-based MPC, incorporates the online estimation (i.e., learning) from adaptive control methods into robust MPC to satisfy constraints in the presence of process noise and model uncertainty during learning. Recent years have seen a flurry of work on nonlinear predictive control methods that apply contemporary machine learning techniques to learn uncertain dynamics online \cite{lew_safe_2020, HewingKabzanEtAl2017, koller_learning-based_2018, MishraGasparino2021}. These methods typically result in non-convex programs for trajectory optimization under the learned dynamics, while relying on conservative approximate methods for uncertainty propagation to guarantee constraint satisfaction. However, it is unclear how to construct the necessary components, i.e., the robust positive invariant and the terminal cost function, for predictive control to make claims of persistent constraint satisfaction (i.e., safety) or stability for arbitrary nonlinear systems. Some methods ignore these topics and do not make rigorous safety guarantees \cite{HewingKabzanEtAl2017}. Other works, such as \cite{koller_learning-based_2018, MishraGasparino2021}, assume these ingredients already exist or only consider trajectory optimization tasks where a goal region needs to be reached in a finite number of time steps \cite{lew_safe_2020}. Moreover, iterative methods used to solve for local minima of non-convex programs can be computationally prohibitive and often have limited performance guarantees. }

\rev{To make rigorous safety guarantees, we will focus on adaptive robust methods for systems that are nominally linear, as considered in \cite{bujarbaruah_adaptive_2018, BujarbaruahZhangEtAl2020, bujarbaruah_semi-definite_2020, soloperto_learning-based_2018, cairano_indirect_2016, KohlerAndinaEtAl2019, AswaniGonzalezEtAl2013}. A straightforward approach is to maintain an outer bound on any unknown nonlinear terms in the dynamics and use it as a disturbance bound in any chosen robust MPC scheme \cite{AswaniGonzalezEtAl2013, soloperto_learning-based_2018,  bujarbaruah_semi-definite_2020, bujarbaruah_adaptive_2018, BujarbaruahZhangEtAl2020}. These methods avoid some of the difficulties associated with trajectory optimization for nonlinear dynamics by ignoring the actual values of the nonlinear terms at any point in the state space. That is, these methods do not exploit the learned structure in the a priori unknown dynamics, often rendering them overly conservative or fragile.}

\subsection{Contributions}
\rev{We present an ARMPC method for systems with an additive unknown nonlinear dynamics term, subject to state and input constraints. Rather than construct an outer envelope for such terms, as is normative in ARMPC literature for linear systems \cite{AswaniTomlin2012,AswaniGonzalezEtAl2013}, we develop theoretical guarantees for a broad class of \emph{function approximators}, including set membership and least-squares methods for certain noise models. Our key idea is to decompose uncertain dynamics terms into a \emph{matched} component that lies in a subspace that can be stably canceled by the control input, and an \emph{unmatched} component that lies in an orthogonal complement to this subspace. We apply certainty equivalent adaptive control techniques to stably cancel the \emph{matched} component from the dynamics and then apply robust MPC, considering the unmatched component as a bounded disturbance. Therefore, our method explicitly uses estimates of the unknown dynamics term throughout the state space for control, i.e., it takes advantage of the learned structure in the dynamics. We prove our method is recursively feasible and input-to-state stable. Moreover, we demonstrate on various simulated systems that our method reduces the conservatism and increases the feasible domain of the resulting robust MPC problem compared to typical adaptive robust MPC methods.} 

\rev{The performance of the adaptive control strategy, which learns a structured representation of the unknown dynamics term, relies on the quality of the features used in online learning. Thus, in addition to investigating standard techniques for feature construction, we introduce a Bayesian meta-learning algorithm \cite{harrison_meta-learning_2020} for feature learning. This method, which learns features that are broadly useful across tasks (or instantiations of unknown dynamics) produces useful features and well-calibrated priors, and satisfies our desiderata for learning algorithms. Beyond the central algorithmic contributions of this paper, we also show the utility of these meta-learning algorithms in adaptive and learning-based control.}

\subsection{Organization}
In \cref{sec:prob}, we pose the general robust infinite-horizon optimal control problem central to this work. In \cref{sec:prelim}, we discuss a standard robust MPC solution to this problem that is a core tool in the development of our approach. Moreover, we outline several tools that are used in the analysis of our approach. In \cref{sec:approach}, we describe our approach, prove the stability of the method, and discuss different possible assumptions on the learning setting and the impact on the controller. In \cref{sec:est}, we discuss two estimators that we use in our experiments -- set-membership and linear regression. We also compare these estimators, and introduces meta-learning as a powerful black-box tool for feature learning and prior calibration. 
Finally, we present simulation results in \cref{sec:exp}, and we conclude the paper and discuss directions for future work in \cref{sec:conc}.

\section{Problem Formulation}\label{sec:prob}
We consider the robust control of nonlinear discrete-time systems of the form
\begin{equation}\label{eq:problem-dynamics}
    x(t+1) = Ax(t) + Bu(t) + f(x(t)) + v(t),
\end{equation}
where~$x(t) \in \R^n$ is the system state, $u(t) \in \R^m$ is the control input, $A \in \R^{n \times n}$ and~$B \in \R^{n\times m}$ are known constant matrices, and $v(t) \in \calV$ is a disturbance in a known compact convex set~$\mathcal{V}$ containing the origin. In addition, an \emph{unknown, nonlinear} dynamics term $f : \R^n \to \R^n$ acts on the system, representing the unmodelled influence of the environment on the nominally linear dynamics of system~\cref{eq:problem-dynamics}. For example, $f(x)$ can model the effect that wind conditions have on the linearized dynamics of a drone. We assume the disturbances have zero mean and are independent and identically distributed (iid) according to some distribution $p(v)$, i.e., $v(t) \iid p(v)$ and~$\E[v(t)] =0$ for all $t \geq 0$. Our goal is to regulate the system to the origin according to the robust optimal control problem
\begin{equation}\label{eq:problem}
\hspace{-0.65em}\begin{aligned}
    \minimize_{x,u}\enspace&
        \E\Big[ \sum_{t=0}^\infty h(x(t), u(t)) \Big] 
    \\
    \subjectto\enspace
        &x(t+1) = Ax(t) + Bu(t) + f(x(t)) + v(t)
        \\&u(t) \in \calU,\, x(t) \in \calX,\, v(t) \in \mathcal{V},\, \forall t\in\mathbb{N}_{\geq 0}
\end{aligned}\,,
\end{equation}
where $\calX \subseteq \R^n$ and $\calU \subseteq \R^m$ are compact convex sets containing the origin,
and $h(x,u) =  x^\top Q x + u^\top R u$ is a quadratic stage cost parameterized by positive semi-definite matrix $Q \in \mathbb{S}^n_{\succeq 0}$ and positive-definite matrix $R \in \mathbb{S}^m_{\succ 0}$. The problem \cref{eq:problem} is computationally intractable to solve because the horizon is infinite and the nonlinear function~$f$ makes the problem non-convex. To approximately solve \cref{eq:problem}, we need additional assumptions on the unknown, nonlinear dynamics term~$f$. In particular, to derive a controller that is robust to any possible value of~$f(x)$, we need~$f$ to be bounded on~$\mathcal{X}$. Moreover, to construct guarantees on the online estimation of~$f$ and establish properties of a controller using this estimate, we also need to assume some structure of~$f$. For these reasons, we make the following assumption.

\begin{assumption}[structure]\label{as:bounded-feat}
    The nonlinear dynamics term ${f : \R^n \to \R^n}$ is linearly parameterizable, i.e.,
    \begin{equation}\label{eq:func-approx}
        f(x) = W\phi(x),\ \forall x \in \R^n,
    \end{equation}
    where $\phi : \R^n \to \R^d$ is a \emph{known} nonlinear feature map, and $W \in \R^{n \times d}$ is an \emph{unknown} weight matrix. Moreover, ${\|\phi(x)\| \leq 1}$ for any  $x \in \calX$, where $\|\cdot\|$ is the Euclidean~norm.
\end{assumption}

Representing a nonlinear function using a feature map is common both in adaptive control \cite{SlotineLi1991, IoannouFidan2006} and contemporary machine learning \cite{harrison_meta-learning_2020, mania2020active}, as they can represent arbitrary functions if properly designed. Without loss of generality, we assume the upper norm bound on the features is one for simplicity. This parameterization admits function classes such as neural networks with scaled sigmoid outputs.

\subsection{Matched and Unmatched Uncertainty}
While it is common in adaptive control to assume the uncertain function~$f$ in \cref{eq:problem-dynamics} can be stably cancelled in its entirety \cite{SlotineLi1991, LavretskyWise2013}, we will generalize this approach to a setting where perfect cancellation is not possible. We use the following definition to distinguish between components of the uncertain dynamics~$f$ that can and cannot be cancelled.
\begin{definition}[matched and unmatched uncertainty] 
    The uncertain function $f(x)$ in \cref{eq:problem-dynamics} is a \emph{matched uncertainty} if $f(x) \in \mathrm{Range}(B)$ for all $x \in \calX$. Conversely, if there exists an $x \in \calX$ such that $f(x) \notin \mathrm{Range}(B)$, then $f(x)$ is an \emph{unmatched uncertainty}.
\end{definition}

In this work, we assume the matrix $B$ in \cref{eq:problem-dynamics} has full column rank, i.e., there are no redundant actuators; this guarantees that the Moore-Penrose pseudoinverse $B^\dagger \coloneqq (B^\top B)^{-1}B^\top$ exists. If $f(x)$ is a matched uncertainty, then the function $g(x) = B^\dagger f(x)$ satisfies $f(x) = Bg(x)$ for any $x \in \calX$. 

Controlling systems with matched uncertainty is a classical problem in the adaptive control literature, much of which relies on the observation that setting $u(t) = \bar{u}(t) - g(x(t))$ in~\cref{eq:problem-dynamics} would cancel the nonlinear term to yield linear dynamics with respect to the nominal input $\bar{u}(t)$. \emph{Certainty equivalent} controllers approximately cancel $g(x)$ with an estimate $\hat{g}(x)$ and can yield simple nonlinear adaptive laws that achieve asymptotic tracking performance for matched systems. Even though systems are often designed to be easy to control, unmatched uncertainty affects many practical systems of interest, such as underactuated robots (e.g., quadrotors and cars). We propose to decompose the uncertain function $f$ into matched and unmatched components, apply certainty equivalent cancellation to the matched component, and curb the impact of the unmatched component with robust MPC. Applying part of the input to cancel matched uncertainty instantaneously prevents part of $f$ from leaking into the dynamics, avoiding the need to react to large observed disturbances.

\section{Robust MPC Background}\label{sec:prelim}
We now briefly review how we could approximately solve the optimal control problem \cref{eq:problem} with existing robust MPC techniques that treat $d(t) \coloneqq f(x(t)) + v(t)$ as a single bounded disturbance term, since $d(t)$ lies in the set
\begin{equation}
    \mathcal{D} \coloneqq \{f(x) + v \in \R^n \mid x \in \mathcal{X},\, v \in \mathcal{V}\}
\end{equation}
for all~$t \in \mathbb{N}_{\geq0}$. Indeed, $\mathcal{D}$ is bounded under \cref{as:bounded-feat} and the boundedness of $\mathcal{X}$ and $\mathcal{V}$.

\subsection{Receding Horizon Control}\label{sec:receding-horizon-control}
We focus on receding-horizon robust MPC schemes, whereby an approximate version of \cref{eq:problem} with finite horizon $N \in \mathbb{N}_{>0}$ is solved online with full state feedback. Rather than search over open-loop input sequences, which can incur issues with feasibility and stability, we search over closed-loop feedback policies \cite{goulart_optimization_2006, BujarbaruahZhangEtAl2020, mayne_robust_2005}. In particular, we follow \cite{goulart_optimization_2006} in optimizing over time-varying, causal, affine disturbance feedback policies of the form 
\begin{equation}\label{eq:unmatched-causal-pol}
    u_{t+k|t} = \bar{u}_{t+k|t} + {\textstyle\sum_{j=0}^{k-1}} K_{kj|t} d_{t+j|t},
\end{equation}
via the robust MPC problem
\begin{equation}\label{eq:robust-mpc}
\hspace{-0.7em}\begin{aligned}
    \minimize_{\substack{
        \{K_{kj|t}\}_{k=0,j=0}^{N-1,k-1},\\
        \{\bar{u}_{t+k|t}\}_{k=0}^{N-1}
    }}\enspace
    &V_N(\bar{x}_{t+N|t}) + \sum_{k=0}^{N-1} h(\bar{x}_{t+k|t}, \bar{u}_{t+k|t})
    \\
    \subjectto\enspace
    & \bar{x}_{t+k+1|t} = A\bar{x}_{t+k|t} + B\bar{u}_{t+k|t}
    \\& x_{t+k+1|t} = Ax_{t+k|t} + Bu_{t+k|t} + d_{t+k|t}
    \\& u_{t+k|t} = \bar{u}_{t+k|t}
        + {\textstyle\sum_{j=0}^{k-1}}K_{kj|t}d_{t+j|t}
    \\& x_{t+k|t} \in \mathcal{X},\ u_{t+k|t} \in \mathcal{U}
    \\& \forall k \in \{0,1,\dots,N-1\}
    \\& \bar{x}_{t|t} = x(t),\ x_{t|t} = x(t),\ x_{t+N|t} \in \mathcal{O}
    \\& \forall \{d_{t+k|t}\}_{k=0}^{N-1} \subset \mathcal{D}
\end{aligned}~.
\end{equation}
The problem \cref{eq:robust-mpc} optimizes a time-varying feedback policy with a cost on the nominal trajectory $(\bar{x}, \bar{u})$ subject to state and input constraints on the realized trajectory $(x, u)$. We use the subscript $t+k|t$ for quantities at the $k$-th step of the prediction horizon when \cref{eq:robust-mpc} is solved online at time~$t \in \mathbb{N}_{\geq 0}$. If the function $V_N : \mathcal{X} \to \R$, the terminal set~$\mathcal{O}$, and the disturbance set~$\mathcal{D}$ are convex, then~\cref{eq:robust-mpc} is a convex problem; we refer readers to~\cite{goulart_optimization_2006} for implementation details. One might also consider a formulation of \cref{eq:robust-mpc} where the feedback gains are fixed, yielding a more basic tube MPC problem \cite{mayne_robust_2005}.

Solving~\cref{eq:robust-mpc} in a receding horizon fashion encodes the closed-loop feedback policy $u^\star : \mathcal{X} \times \mathbb{N}_{\geq 0} \to \mathcal{U}$, where $u^\star(x(t),t) = u_{t|t}^\star$ is the first element of an optimal control input sequence from solving \cref{eq:robust-mpc} with the initial condition $(x(t),t)$. This feedback policy is \emph{robust} since the constraints in \cref{eq:robust-mpc} are enforced for every possible $N$-step sequence of disturbances.

\subsection{Invariant Sets}
The choice of terminal ingredients $V_N$ and $\mathcal{O}$ in \cref{eq:robust-mpc} is pivotal to guarantee that the closed-loop system formed by the dynamics \cref{eq:problem-dynamics} and the MPC policy satisfies $x(t)\in\calX$, $u(t) \in \calU$ for all $t\geq0$, and is stable. In particular, to establish these recursive feasibility and stability guarantees, the terminal set $\mathcal{O}$ must be invariant with respect to the underlying dynamics \cref{eq:problem-dynamics} under some policy that satisfies the input constraints on $\mathcal{O}$. Moreover, the terminal cost $V_N$ must be a Lyapunov function associated with the stage cost on $\mathcal{O}$ under the policy associated with~$\mathcal{O}$. We review some invariant set notions below.

\begin{definition}[invariant sets \cite{BorrelliBemporadEtAl2017}]
    Consider the dynamical system $x(t+1) = q(x(t), v(t))$, where $v(t)$ is a disturbance signal that takes values in some set $\mathcal{V}$, 
    subject to the state constraint set $\mathcal{X} \subseteq \R^n$. Then a \emph{robust positive invariant (RPI) set} for the constrained system is any set $\mathcal{O} \subseteq \mathcal{X}$ satisfying
    \begin{equation}
        x(0) \in \mathcal{O} \implies x(t) \in \mathcal{O},\ \forall t \in \mathbb{N}_{\geq 0},
    \end{equation}
    for any possible sequence ${\{v(t) \mid t \in \mathbb{N}_{\geq 0}\} \subset \mathcal{V}}$. The \emph{maximal RPI set} $\mathcal{O}_\infty \subseteq \calX$ is the RPI set satisfying ${\mathcal{O} \subseteq \mathcal{O}_\infty}$ for any other RPI set $\mathcal{O} \subseteq \mathcal{X}$.
\end{definition}

Invariant sets are essential to predictive control design, as planning a trajectory into an RPI set associated with a fixed stabilizing feedback policy guarantees that there exists a robust MPC policy that satisfies the constraints for all time. Moreover, computing the maximal RPI set for a linear time-invariant system is algorithmically straightforward \cite{BorrelliBemporadEtAl2017}.

\subsection{ISS Stability} 
Due to the disturbance term $v(t)$, the system \cref{eq:problem-dynamics} typically cannot be regulated to the origin even asymptotically. Therefore, we briefly review relevant results of input-to-state stability (ISS) theory, which is often used to analyze robust control algorithms \cite{LimonAlamoEtAl2009, goulart_optimization_2006, bujarbaruah_semi-definite_2020}. We propose an adaptive  approach that refines an estimate of the unknown function~$f$ online. As a result, the closed-loop system is time-varying. First, we review standard comparison function notation \cite{li_input--state_2018}. A function $\alpha: \R_{\geq 0} \to \R_{\geq 0}$ is a class-$\mathcal{K}$ function if it is continuous, strictly increasing, and $\alpha(0) = 0$. In addition, $\alpha$ is class-$\mathcal{K}_\infty$ if it is class-$\mathcal{K}$ and $\lim_{x\to\infty}\alpha(x) = \infty$. A function $\beta : \R_{\geq0} \times \mathbb{N}_{\geq0 }\to \R_{\geq0}$ is class-$\mathcal{KL}$ if $\beta(\cdot, t)$ is class-$\mathcal{K}$ for any fixed $t\geq0$,  $\beta(x,\cdot)$ is decreasing for any fixed $x\geq 0$, and $\lim_{t\to\infty}\beta(x,t) = 0$ for any fixed $x\geq0$. We now use these function classes to state the ISS definitions.
\begin{definition}[input-to-state stable (ISS) \cite{li_input--state_2018}]\label{def:iss}
    The system ${x(t+1) = q(t, x(t), v(t))}$ with disturbance~$v(t)$ is globally \emph{input-to-state stable (ISS)} if there exists a class-$\mathcal{KL}$ function ${\beta: \R_{\geq 0} \times \mathbb{N}_{\geq 0} \to \R_{\geq 0}}$ and a class-$\mathcal{K}$ function ${\gamma: \R_{\geq 0} \to \R_{\geq 0}}$ such that
    \begin{equation}
        \|x(t)\| \leq \beta(\|x(0)\|, t) + \gamma(\textstyle{\sup_{k \in \{0, 1, \dots, t\}}} \|v(k)\|),
    \end{equation}
    for all $x(0) \in \R^n$ and $t \in \mathbb{N}_{\geq 0}$.
\end{definition}

In essence, a system is ISS if it is nominally asymptotically stable and the influence of the disturbance is bounded. This makes ISS a convenient framework to analyze the stability of systems subject to random disturbances. Similarly to nonlinear stability analysis for deterministic systems, we can show a system is ISS if there exists an ISS-Lyapunov function. 

\begin{definition}[ISS-Lyapunov function \cite{li_input--state_2018}]
    The function $V: \mathbb{N}_{\geq0} \times\R^n \to \R$ is an \emph{ISS-Lyapunov function} for the system ${x(t+1) = q(t, x(t), v(t))}$ if it is continuous in $x$, continuous at the origin for all $t \in \mathbb{N}_{\geq 0}$, and there exist three class-$\mathcal{K}_{\infty}$ functions $\alpha_1,\alpha_2,\alpha_3$ and a class-$\mathcal{K}$ function $\sigma$ such that
    \begin{equation}\begin{aligned}
        \alpha_1(\|x(t)\|) \leq V(t,x(t)) &\leq \alpha_2(\|x(t)\|)
        \\
        V(t+1, x(t+1)) - V(t, x(t)) &\leq -\alpha_3(\|x(t)\|) + \sigma(\|v(t)\|)
    \end{aligned}~,
    \end{equation}
    for all $x(t) \in \R^n$.
\end{definition}

\begin{theorem}[\!\!\label{thm:iss}\cite{li_input--state_2018}]
    A time-varying system is globally ISS if it admits an ISS-Lyapunov function.
\end{theorem}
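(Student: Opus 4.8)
The plan is to exhibit the ISS estimate in \cref{def:iss} by tracking the ISS-Lyapunov function $V$ along closed-loop trajectories and carrying out a standard dissipation argument. Throughout, write $V_t \coloneqq V(t, x(t))$ and $\bar v_t \coloneqq \sup_{k \in \{0,\dots,t\}}\|v(k)\|$. Because the class-$\mathcal{K}_\infty$ functions $\alpha_1,\alpha_2,\alpha_3$ and the class-$\mathcal{K}$ function $\sigma$ in the definition of the ISS-Lyapunov function are uniform in $t$, every inequality below is time-independent, so the resulting bound holds uniformly over the time-varying dynamics $q$; this is precisely what lets a single argument cover the time-varying setting.

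First I would convert the Lyapunov difference inequality into a recursion in $V$ alone. From $V_t \le \alpha_2(\|x(t)\|)$ we get $\|x(t)\| \ge \alpha_2^{-1}(V_t)$, hence $\alpha_3(\|x(t)\|) \ge \alpha_3(\alpha_2^{-1}(V_t))$. Substituting into the decrease condition yields
\[
V_{t+1} \le V_t - \rho(V_t) + \sigma(\|v(t)\|), \qquad \rho \coloneqq \alpha_3 \circ \alpha_2^{-1},
\]
with $\rho$ of class $\mathcal{K}_\infty$ (one may assume $\alpha_3 \le \alpha_2$, so that $\mathrm{id} - \rho$ is class-$\mathcal{K}$; otherwise replace $\rho$ by $\min\{\rho, \tfrac12\,\mathrm{id}\}$).

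Next I would perform the case split that produces the two terms of the ISS bound. Fix the threshold $\chi(s) \coloneqq \rho^{-1}(2\sigma(s))$, a class-$\mathcal{K}$ function. When $V_t \ge \chi(\|v(t)\|)$, i.e.\ $\rho(V_t) \ge 2\sigma(\|v(t)\|)$, the recursion gives the strict contraction $V_{t+1} \le V_t - \tfrac12\rho(V_t)$; when instead $V_t < \chi(\|v(t)\|) \le \chi(\bar v_t)$, the state already lies in a sublevel set controlled by the disturbance. The key lemma -- the discrete-time comparison lemma from \cite{li_input--state_2018} -- then turns the contraction $V_{t+1} \le V_t - \tfrac12\rho(V_t)$ into a class-$\mathcal{KL}$ estimate $V_t \le \tilde\beta(V_0, t)$ valid while the trajectory stays outside the ball, and a short invariance check shows that once $V$ enters the ball it cannot afterwards exceed $\theta(\bar v_t) \coloneqq \chi(\bar v_t) + \sigma(\bar v_t)$. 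Combining the two regimes gives
\[
V_t \le \tilde\beta(V_0, t) + \theta(\bar v_t).
\]

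Finally I would unwind this back to the state via the sandwich $\alpha_1(\|x(t)\|) \le V_t \le \alpha_2(\|x(t)\|)$. Applying $\alpha_1^{-1}$, using its monotonicity together with $\alpha_1^{-1}(a+b) \le \alpha_1^{-1}(2a) + \alpha_1^{-1}(2b)$, and $V_0 \le \alpha_2(\|x(0)\|)$, recovers exactly \cref{def:iss} with
\[
\beta(r,t) \coloneqq \alpha_1^{-1}\!\big(2\,\tilde\beta(\alpha_2(r), t)\big), \qquad \gamma(s) \coloneqq \alpha_1^{-1}\!\big(2\,\theta(s)\big),
\]
where $\beta$ is class-$\mathcal{KL}$ and $\gamma$ is class-$\mathcal{K}$ by composition. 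I expect the middle step to be the main obstacle: extracting the $\mathcal{KL}$ decay from the scalar contraction needs the discrete-time comparison lemma together with the technical requirement that $\mathrm{id}-\rho$ be of class $\mathcal{K}$, plus the bookkeeping that glues the ``outside-the-ball'' decay and the ``inside-the-ball'' trapping into one bound. The remaining steps are routine manipulations of comparison functions.
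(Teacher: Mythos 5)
The paper offers no proof of this statement---it is imported verbatim from \cite{li_input--state_2018}---and your argument is precisely the standard discrete-time dissipation proof given in that reference (following Jiang--Wang): reduce the Lyapunov decrease to a scalar recursion $V_{t+1}\le V_t-\rho(V_t)+\sigma(\|v(t)\|)$, split on whether $V_t$ exceeds a disturbance-dependent threshold, apply the comparison lemma to get the $\mathcal{KL}$ decay outside that sublevel set and a forward-invariance check inside it, then sandwich back to $\|x(t)\|$ via $\alpha_1,\alpha_2$. The sketch is correct; the one soft spot is the ad hoc replacement $\rho\mapsto\min\{\rho,\tfrac12\mathrm{id}\}$, which guarantees positivity of $\mathrm{id}-\rho$ but not its monotonicity---the standard remedy is the lemma that any $\rho\in\mathcal{K}_\infty$ admits a minorant $\hat\rho\in\mathcal{K}_\infty$ with $\mathrm{id}-\hat\rho\in\mathcal{K}$, which is exactly the technical crux you flag.
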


The above definitions naturally extend to local ISS stability; for a detailed discussion, we refer readers to \cite{li_input--state_2018,jiang_input--state_2001,LimonAlamoEtAl2009}.

\section{Adaptive Robust MPC}\label{sec:approach}
In this section, we first describe assumptions on and necessary features of the learning procedure in a way that is agnostic to the choice of learning algorithm. We then introduce our adaptive robust MPC approach, and prove stability of the combined learning and control framework.

\subsection{Learning Desiderata}
\rev{
Since the nonlinear dynamics term~$f$ is unknown, our method takes a \emph{certainty equivalent} approach by substituting an estimate~$\hat{f}$ that is refined online as more data becomes available. To guarantee the adaptive robust MPC framework satisfies state and input constraints for all time (i.e., safety), we make several assumptions on~$\hat{f}$. We discuss two commonplace estimators that satisfy these assumptions later in \cref{sec:est}.}

\rev{We maintain the estimate
\begin{equation}
    \hat{f}(x,t) = \what(t)\phi(x)
\end{equation}
of $f(x)$, where $\what(t) \in \R^{n \times d}$ is our estimate of $W$ at time~$t$. To this end, we need bounds on our initial uncertainty, i.e., the difference between $f(x)$ and $\hat{f}(x,0)$ for all $x$. For a general statistical estimator, this entails specifying a risk tolerance ${\delta \in (0,1)}$ and computing confidence intervals on the estimate.}

\begin{assumption}[prior knowledge]\label{as:prior-conf}
    Let $w_i$, $\hat{w}_i(t)$, and $\tilde{w}_i(t)$ be the $i$-th rows of $W$, $\what(t)$, and $\widetilde{W}(t) \coloneqq \what(t) - W$, respectively, for $i \in \{1,2,\dots,n\}$. At $t=0$, we know an initial estimate $\what(0) \in \R^{n \times d}$ and bounded sets $\{\mathcal{W}_i(0)\}_{i=1}^n$ with $\mathcal{W}_i(0) \subset \R^d$, such that $\widetilde{W}(0) \in \mathcal{W}(0)$ with probability at least $1 - \delta$, where we define the sets
    \begin{equation}
        \mathcal{W}(t) \coloneqq \big\{
            \widetilde{W} \in \R^{n \times d} \mid \tilde{w}_i \in \mathcal{W}_i(t),\,\forall i \in \{1,\dots,n\}
        \big\},
    \end{equation}
    for all $t \in \mathbb{N}_{\geq 0}$.
\end{assumption}

\rev{\cref{as:prior-conf} provides only an initial bound on the error of the estimate, which we explicitly label as the estimate at $t=0$. Later, we will define $\mathcal{W}(t)$ for all $t \in \mathbb{N}_{\geq 0}$ when we adaptively update our estimate $\what(t)$ and the bounds $\{\mathcal{W}_i(t)\}_{i=1}^n$ online. Our approach leverages the certainty equivalent ``estimate and cancel'' control laws pioneered in classical unconstrained adaptive control \cite{SlotineLi1991, LavretskyWise2013}. As such, \cref{as:bounded-feat} and \cref{as:prior-conf} are necessary to bound our approximation error. However, since we specify the risk tolerance $\delta$, we cannot guarantee exact constraint satisfaction for all time. Instead, we slightly relax our definition of safety.}

\begin{definition}\label{def:safety}
\rev{Under a risk tolerance of $\delta \in [0,1]$, the system \cref{eq:problem-dynamics} is \emph{safe} when 
\begin{equation}\label{eq:safety}
    \mathrm{Prob}\big(x(t) \in \calX,\, u(t) \in \calU,\, \forall t\geq 0\big) \geq 1 - \delta.
\end{equation}
That is, the probability of a constraint violation should be no more than $\delta$ over the \emph{entire} realized trajectory.}
\end{definition}

\rev{To guarantee closed-loop safety, we assume we have an online adaptation strategy that ensures the quality of the estimate $\what(t)$ cannot get worse over time.}

\begin{assumption}[online learning]\label{as:shrinking}
    We have an \emph{online parameter estimator} that maps an initial estimate $\what(0)$, the associated $1 - \delta$ confidence interval $\calW(0)$, and the trajectory history $\{x(k), u(k)\}_{k=0}^t$ to an \emph{online estimate}~$\what(t)$ and confidence interval $\calW(t)$ at time $t$, such that $\hat{w}_i(t) - w_i \in \calW_i(t)$ for all time $t \in \mathbb{N}_{\geq 0}$ and $i \in \{1,2\dots,n\}$ with probability at least $1-\delta$. We assume the confidence intervals on $\what(t)$ are not growing with time, i.e., that%
    \begin{equation}
        \calW(t+1) \subseteq \calW(t),
    \end{equation}
    for all $t \in \mathbb{N}_{\geq 0}$.
\end{assumption}

\rev{Intuitively, \cref{as:shrinking} states that more data should not decrease the confidence in our estimate of $W$. We discuss two commonplace estimators that satisfy \cref{as:shrinking} in~\cref{sec:est}. Formulating separate confidence intervals for each row of~$\what(t)$ is a natural approach, as fitting~$\what(t)$ to historical data decomposes into $n$ separate least-squares problems (one for each row) if the cross-covariance of~$v(t)$ is zero.}
 
\rev{Crucially, \cref{as:shrinking} allows us to treat the confidence intervals $\calW(t)$ as exact bounds in the control design, since a robust controller that guarantees constraint satisfaction conditioned on the event that $\hat{w}_i(t) - w_i \in \calW_i(t)$ for $i \in \{1, \dots, n\}$ and all time then satisfies \cref{eq:safety}. More formally, let $\mathrm{SAFE}$ denote the event that $x(t) \in \calX$ and $u(t) \in \calU$ for all $t\geq 0$. Then, apply the law of total probability to see that
\begin{equation}
\hspace{-0.4em}\begin{aligned}
    &\mathrm{Prob}\big(\mathrm{SAFE})
    \\
    &\geq \mathrm{Prob}\big(\mathrm{SAFE} \  | \ \what(t) - W \in \calW(t), \, \forall t \geq 0 \big) \\
    &\quad \times \mathrm{Prob}\big( \what(t) - W \in \calW(t), \, \forall t \geq 0 \big)
    \\
    &\geq \underbrace{\mathrm{Prob}\big(\mathrm{SAFE} \  | \ \what(t) - W \in \calW(t), \, \forall t \geq 0 \big)}_{=1~\text{using robust MPC in \cref{sec:unmatched}}} (1-\delta)
\end{aligned}\,,
\end{equation}
since application of the estimators in \cref{sec:est} that satisfy \cref{as:shrinking} ensures $\mathrm{Prob}\big(\what(t) - W \in \calW(t), \, \forall t \geq 0 \big) \geq 1 -\delta$.}

\rev{Therefore, we treat the chance constraint \cref{eq:safety} as a proxy for robust constraint satisfaction and construct our approach for the remainder of \cref{sec:approach} conditioned on the event that $\hat{w}_i(t) - w_i \in \calW_i(t)$ for all time and $i\in \{1,\dots, N\}$. This approach was also taken in \cite{lew_safe_2020, DeanManiaEtAl2018}.  }

\rev{\cref{as:shrinking} does not require the estimated range of the unknown function $f$ to shrink over time. Therefore, our estimation procedure differs from methods such as \cite{bujarbaruah_adaptive_2018, bujarbaruah_semi-definite_2020, soloperto_learning-based_2018} that refine a non-increasing bound exclusively on the range of~$f$ without taking direct advantage of the structure in the nonlinear dynamics.}

\subsection{Certainty Equivalent Cancellation}\label{sec:unmatched}
We propose optimizing over feedback policies that cancel as much of the nonlinear term $f(x)$ as possible.
\begin{definition}
The set of \emph{matching certainty equivalent (CE) policies} is the time-varying function class whose elements $\pi: \calX \times \mathbb{N}_{\geq 0} \to \calU$ are of the form
\begin{equation}\label{eq:unmatched-ce-law}
    \pi(x(t), t) = u^\star(x(t), t) - B^\dagger \hat{f}(x(t), t)
    \smallskip
\end{equation}
\end{definition}
The matching CE policies simply project $\hat{f}(x(t), t)$ onto $\operatorname{Range}(B)$, and cancel out as much of the disturbance as possible in the Euclidean norm sense, since
\begin{equation}
    B^\dagger \hat f(x(t),t) = \argmin_z \|Bz - \hat{f}(x(t), t)\|.
\end{equation}
The matching CE law \cref{eq:unmatched-ce-law} results in the closed-loop dynamics%
\begin{equation}\label{eq:unmatched-error}
\begin{aligned}
    x(t+1) 
    &= Ax(t) + B\pi(x(t),t) + f(x(t)) + v(t) \\
    &= Ax(t) + Bu^\star(x(t),t) + d(t)
\end{aligned}~,
\end{equation}
where we define the compound disturbance term $d(t)$ as
\begin{equation}\begin{aligned}
    d(t) 
    &\coloneqq v(t) + f(x(t)) - BB^\dagger\hat{f}(x(t),t) \\
    &= v(t) \begin{aligned}[t]
        &+ BB^\dagger(f(x(t)) - \hat{f}(x(t),t)) \\
        &+ (I - BB^\dagger)f(x(t))
    \end{aligned}
\end{aligned}~.
\end{equation}
We have written $d(t)$ above with three terms to highlight that it is driven by the process disturbance~$v(t)$, the estimation error~$f(x) - \hat{f}(x)$, and the imperfect matching using~$B^\dagger$.

\begin{remark}
If we know that $f$ is a \emph{matched} uncertainty (i.e., that $f(x(t)) = Bg(x(t))$ for some function $g : \R^n \to \R^m$), we can reduce the compound disturbance to
\begin{equation}
\begin{aligned} \label{eq:matched-dist-dyn}
    d(t) &\coloneqq v(t) + B(g(x(t)) - \hat{g}(x(t),t)),
\end{aligned}
\end{equation}
since $B^\dagger B = I$. Therefore, the matching certainty equivalent controller \cref{eq:unmatched-ce-law} generalizes approaches for systems with matched uncertainty to those with unmatched uncertainty.
\end{remark}

\rev{As mentioned in our learning desiderata, to guarantee safety as defined in \cref{eq:safety}, we construct a controller that guarantees constraint satisfaction when $\what(t) - W \in \calW(t)$ for all $t\geq 0$, which we will assume is the case for the remainder of this section. To design a robust matching CE policy, we need to: 
\begin{itemize}
    \item Guarantee that the closed-loop dynamics \cref{eq:unmatched-error} do not violate state constraints.
    \item Tighten input constraints to account for the certainty equivalent cancellation in \cref{eq:unmatched-ce-law}. 
\end{itemize}
Therefore, we introduce two simple polytopic approximations that bound the support of the cancellation term in \cref{eq:unmatched-ce-law} and the terms that make up the compound disturbance $d(t)$.}


\begin{lemma}\label{lem:unmatched-support}
Consider online approximation of $f(x)$ with features satisfying \cref{as:bounded-feat}, an estimator satisfying \cref{as:shrinking}, and define the estimated support set as
\begin{equation}
    \calF(t) \coloneqq \{z \in \R^n : |z_i| \leq \|\hat{w}_i(t)\| + 2 \max_{\tilde{w}_i \in \calW_i(t)}\|\tilde{w}_i\|\}.
\end{equation}
Then, for all $x \in \mathcal{X}$ and $t,k \in \mathbb{N}_{\geq 0}$ it holds that
\begin{equation}
    \hat{f}(x, t + k), f(x) \in \calF(t).
\end{equation}
\end{lemma}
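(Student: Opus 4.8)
The plan is to bound each coordinate of both $f(x)$ and $\hat{f}(x,t+k)$ separately, reducing everything to bounds on the Euclidean norms of the rows of the relevant weight matrices via \cref{as:bounded-feat}. Writing the $i$-th coordinate of $\hat{f}(x,s) = \what(s)\phi(x)$ as $\hat{w}_i(s)^\top \phi(x)$ and applying Cauchy--Schwarz together with $\|\phi(x)\| \leq 1$ for $x \in \calX$ gives $|\hat{f}_i(x,s)| \leq \|\hat{w}_i(s)\|$, and identically $|f_i(x)| = |w_i^\top \phi(x)| \leq \|w_i\|$. Thus it suffices to show that $\|w_i\|$ and $\|\hat{w}_i(t+k)\|$ are each at most $\|\hat{w}_i(t)\| + 2\max_{\tilde{w}_i \in \calW_i(t)}\|\tilde{w}_i\|$, which is precisely the per-coordinate radius defining $\calF(t)$. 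Throughout I would work on the $(1-\delta)$-probability event of \cref{as:shrinking} that $\tilde{w}_i(s) \coloneqq \hat{w}_i(s) - w_i \in \calW_i(s)$ for all $s$ and all $i$, consistent with the conditioning used throughout \cref{sec:approach}.

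First I would handle $f(x)$. Since $w_i = \hat{w}_i(t) - \tilde{w}_i(t)$, the triangle inequality yields $\|w_i\| \leq \|\hat{w}_i(t)\| + \|\tilde{w}_i(t)\| \leq \|\hat{w}_i(t)\| + \max_{\tilde{w}_i \in \calW_i(t)}\|\tilde{w}_i\|$, which is already within the claimed radius (with room to spare). Combined with $|f_i(x)| \leq \|w_i\|$, this establishes $f(x) \in \calF(t)$.

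The slightly more delicate part is $\hat{f}(x,t+k)$, since it involves the estimate at a future time. Here I would invoke the shrinking property $\calW(t+k) \subseteq \calW(t)$ from \cref{as:shrinking}, which gives $\tilde{w}_i(t+k) \in \calW_i(t+k) \subseteq \calW_i(t)$. Writing $\hat{w}_i(t+k) = w_i + \tilde{w}_i(t+k)$ and applying the triangle inequality together with the bound on $\|w_i\|$ above gives
\[
    \|\hat{w}_i(t+k)\| \leq \|w_i\| + \|\tilde{w}_i(t+k)\| \leq \|\hat{w}_i(t)\| + 2\max_{\tilde{w}_i \in \calW_i(t)}\|\tilde{w}_i\|,
\]
where the two copies of $\max_{\tilde{w}_i \in \calW_i(t)}\|\tilde{w}_i\|$ arise from bounding $\|\tilde{w}_i(t)\|$ (inside $\|w_i\|$) and $\|\tilde{w}_i(t+k)\|$, respectively. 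Together with $|\hat{f}_i(x,t+k)| \leq \|\hat{w}_i(t+k)\|$, this yields $\hat{f}(x,t+k) \in \calF(t)$ and completes the argument.

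I expect the main obstacle to be purely bookkeeping: the factor of $2$ in the definition of $\calF(t)$ is exactly what is needed to absorb both the present-time estimation error (to pass from $\hat{w}_i(t)$ to the truth $w_i$) and the future-time estimation error (to pass back from $w_i$ to $\hat{w}_i(t+k)$), and the nesting assumption $\calW(t+k) \subseteq \calW(t)$ is essential to express both errors in terms of the single confidence set $\calW_i(t)$ available at the current time.
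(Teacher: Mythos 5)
Your proposal is correct and follows essentially the same argument as the paper: Cauchy--Schwarz with $\|\phi(x)\|\leq 1$ reduces everything to row-norm bounds, the triangle inequality splits $\|\hat{w}_i(t+k)\|$ into $\|\hat{w}_i(t)\| + \|\hat{w}_i(t)-w_i\| + \|\hat{w}_i(t+k)-w_i\|$ (the paper merely applies the two triangle-inequality steps in the opposite order), and the nesting $\calW_i(t+k)\subseteq\calW_i(t)$ from \cref{as:shrinking} absorbs both error terms into $2\max_{\tilde{w}_i\in\calW_i(t)}\|\tilde{w}_i\|$.
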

\begin{proof}
We show $\calF(t)$ over-approximates the range of values $z =\hat{f}(x, t+k)$ can take for any $x \in \calX$ and $k \geq 0$. Let $z = \hat{f}(x, t+k) = \widehat{W}(t+k)\phi(x)$. Then,
\begin{equation}\begin{aligned}
    |z_i|
    &= |\hat{w}_i(t+k)^\top\phi(x)| \\
    &\leq \|\hat{w}_i(t+k)\| \\
    &\leq \|\hat{w}_i(t)\| + \|\hat{w}_i(t+k) - \hat{w}_i(t)\| \\
    &\leq \|\hat{w}_i(t)\| + \|\hat{w}_i(t+k) - w_i\| + \|\hat{w}_i(t) - w_i\|.
\end{aligned}
\end{equation}
The non-increasing confidence interval property from \cref{as:shrinking} gives $ \calW_i(t+k)\subseteq \calW_i(t)$ for $k\geq 0$, so
\begin{equation}
    |z_i| \leq \|\hat{w}_i(t)\| + 2\max_{\tilde{w}_i \in \calW_i(t)}\|\tilde{w}_i\|,
\end{equation}
which proves that~$\hat{f}(x, t+k) \in \calF(t)$ for all~$k\geq 0$. In addition, let~$y = f(x) = W\phi(x)$ for some $x\in\calX$. Then, 
\begin{equation}\begin{aligned}
    |y_i| 
    &= |w_i^\top \phi(x)| \\
    &\leq \|w_i\| \\
    &\leq \|\hat w_i(t)\| + \|\hat w_i(t) - w_i\| \\
    &\leq \|\hat w_i(t)\| + \max_{\tilde{w}_i \in \calW_i(t)}\|\tilde{w}_i\|.
\end{aligned}
\end{equation}
Hence,~$f(x) \in \calF(t)$. 
\end{proof}

The set $\mathcal{F}(t)$ in \cref{lem:unmatched-support} contains all possible values that our online estimate can take for all future times. It is not straightforward to create a tighter approximation (i.e., eliminate the factor of 2) without additional assumptions. To see this, consider a constant unit norm ball confidence interval. In the worst case, the true parameter lies on the boundary of the ball around the current estimate. This means all future estimates may lie a Euclidean distance of 2 units away from the current estimate, yielding the bound in \cref{lem:unmatched-support}.

\begin{remark}
Representing the set $\calF(t)$ defined in \cref{lem:unmatched-support} requires finding the max-norm element of a convex set. For many convex confidence intervals this is straightforward to compute, although it is not always efficient. For ellipsoids, this amounts to an eigenvalue computation. For polytopes, this requires vertex enumeration, incurring exponential complexity in the dimension of the state space $n$.
\end{remark}

Moreover, \cref{lem:unmatched-support} does not require that $\calF(t+1) \subseteq \calF(t)$, so we provide the following corollary to help us create an approximation that is non-increasing in size.
\begin{corollary}\label{cor:unmatched-support}
    At time $t$, the sets $\{\calF(i)\}_{i=0}^t$ are known, so
    \begin{equation}
        \hat{f}(x, t + k), f(x) \in \bigcap_{i=0}^t \calF(i) \eqqcolon \calFhat(t)
    \end{equation}
    for all $x \in \calX$ and $k \in \mathbb{N}_{\geq 0}$, where we define $\calFhat(t)$ as the set
    \begin{equation}\label{eq:calfhat-def}
        \{z : |z_i| \leq \min_{j \in \{0,\dots,t\}}[\|\hat{w}_i(j)\| 
            + 2\!\max_{\tilde{w}_i \in \calW_i(j)}\|\tilde{w}_i\|]\}.
    \end{equation}
    Note $\calFhat(t)$  can be computed recursively in time.
\end{corollary}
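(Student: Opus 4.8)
The plan is to obtain the corollary as an almost immediate consequence of \cref{lem:unmatched-support}; the only genuinely new content is (i) intersecting the support sets across the known history $\{\calF(j)\}_{j=0}^t$ and (ii) recognizing that this intersection has the explicit box form stated in \cref{eq:calfhat-def}, together with its recursive update. First I would apply \cref{lem:unmatched-support} separately at each time index $j \in \{0,1,\dots,t\}$, each playing the role of ``$t$'' in the lemma statement, which yields $\hat{f}(x, j+k'),\, f(x) \in \calF(j)$ for every $x \in \calX$ and every $k' \in \mathbb{N}_{\geq 0}$.

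The one point requiring care is the reindexing needed to conclude that the \emph{same} future estimate $\hat{f}(x, t+k)$ lies in \emph{every} earlier support set $\calF(j)$, not just in $\calF(t)$. Since $j \le t$, I would write $t + k = j + (t - j + k)$ with nonnegative offset $t - j + k \ge 0$, so $\hat{f}(x, t+k)$ is precisely an instance $\hat{f}(x, j+k')$ already covered by the lemma applied at time $j$; hence $\hat{f}(x, t+k) \in \calF(j)$. The containment $f(x) \in \calF(j)$ holds directly, as $f(x)$ does not depend on the horizon index. Intersecting over $j \in \{0,\dots,t\}$ then gives $\hat{f}(x, t+k),\, f(x) \in \bigcap_{j=0}^{t}\calF(j) = \calFhat(t)$ for all $x \in \calX$ and $k \in \mathbb{N}_{\geq 0}$, which is the desired membership.

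It remains to verify the explicit description and the recursion. Writing the per-coordinate half-width $c_i(j) \coloneqq \|\hat{w}_i(j)\| + 2\max_{\tilde{w}_i \in \calW_i(j)}\|\tilde{w}_i\|$, each $\calF(j)$ is the axis-aligned box $\{z : |z_i| \le c_i(j),\ \forall i\}$, so the intersection of boxes is the box whose half-widths are the coordinatewise minima, i.e., $\{z : |z_i| \le \min_{j \in \{0,\dots,t\}} c_i(j)\}$, recovering \cref{eq:calfhat-def}. The recursive claim then follows from $\min_{j \in \{0,\dots,t+1\}} c_i(j) = \min\{\min_{j \in \{0,\dots,t\}} c_i(j),\, c_i(t+1)\}$, so only the current half-widths need to be stored and updated by a componentwise minimum with the newly computed $c_i(t+1)$.

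Honestly there is no substantial obstacle here: the result is a bookkeeping consequence of the lemma and \cref{as:shrinking}, and no new probabilistic or estimation argument is needed. The only subtlety to flag carefully is the horizon reindexing in the second paragraph, which is what guarantees that intersecting over past times is legitimate rather than discarding information about the future estimates.
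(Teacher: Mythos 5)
Your proposal is correct and matches the paper's reasoning: the paper presents this corollary without a separate proof, treating it as an immediate consequence of \cref{lem:unmatched-support} applied at each past time $j\le t$, which is exactly what you do, and your reindexing step ($t+k = j + (t-j+k)$ with nonnegative offset) correctly fills in the one detail the paper leaves implicit. The box-intersection description and the componentwise-minimum recursion are likewise the intended bookkeeping.
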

To construct a robust MPC problem to optimize the CE policy~\cref{eq:unmatched-ce-law}, we need to account for the compound disturbance~$d(t)$. We do this with the following lemma.
\begin{lemma}\label{lem:unmatched-disturbance}
Assume the online parameter estimator satisfies \cref{as:shrinking} with features that satisfy \cref{as:bounded-feat} and define the approximation error support as the set
\begin{equation}
    \mathcal{D}(t) \coloneqq \{z \in \R^n \mid |z_i| \leq \max_{\tilde{w}_i \in \calW_i(t)} \|\tilde{w}_i\|,  \forall i \in \{1, \dots,n\}\}.
\end{equation}
If we control the system \cref{eq:problem-dynamics} using the certainty equivalent control law \cref{eq:unmatched-ce-law}, then at time $t$ for all $k\in \mathbb{N}_{\geq 0}$, the compound disturbance $d(t+k)$ in the dynamics~\cref{eq:unmatched-error} is contained in the set $\calDhat(t) \subseteq \calDhat(t-1)$, defined as
\begin{equation}\label{eq:unmatched-dist}
    \calDhat(t) \coloneqq (I - BB^\dagger)\calFhat(t) \oplus BB^\dagger \calD(t) \oplus \calV.
\end{equation}
Here $\oplus$ indicates the Minkowski sum and a matrix-set multiplication indicates a linear transformation of the set's elements.
\end{lemma}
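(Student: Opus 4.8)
The plan is to bound the compound disturbance $d(t+k)$ by working directly from its three-term decomposition already displayed above \cref{eq:unmatched-dist}, namely
\begin{equation}
    d(t+k) = v(t+k) + BB^\dagger\big(f(x(t+k)) - \hat f(x(t+k),t+k)\big) + (I - BB^\dagger)f(x(t+k)),
\end{equation}
and to match each of the three summands to one of the three sets whose Minkowski sum defines $\calDhat(t)$. The noise term $v(t+k)\in\calV$ is immediate. The unmatched term $(I-BB^\dagger)f(x(t+k))$ lands in $(I-BB^\dagger)\calFhat(t)$ because \cref{cor:unmatched-support} guarantees $f(x)\in\calFhat(t)$ for every $x\in\calX$ and every future index, so applying the linear map $(I-BB^\dagger)$ keeps the term in $(I-BB^\dagger)\calFhat(t)$. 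The only substantive work is the middle estimation-error term, which I aim to show lies in $BB^\dagger\calD(t)$; once all three inclusions hold, the Minkowski sum gives $d(t+k)\in\calDhat(t)$, since membership is preserved under linear transformation and set addition.

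For the estimation-error term I would write $f(x)-\hat f(x,t+k) = -\wtilde(t+k)\phi(x)$, so its $i$-th component is $-\tilde w_i(t+k)^\top\phi(x)$. Applying Cauchy--Schwarz and the feature bound $\|\phi(x)\|\le 1$ from \cref{as:bounded-feat} yields $|\tilde w_i(t+k)^\top\phi(x)|\le\|\tilde w_i(t+k)\|$. This is the step where the \emph{future} time index $t+k$ matters: by \cref{as:shrinking} we have $\tilde w_i(t+k)\in\calW_i(t+k)$, and the non-increasing confidence-set property $\calW_i(t+k)\subseteq\calW_i(t)$ (for $k\ge0$) lets me bound $\|\tilde w_i(t+k)\|\le\max_{\tilde w_i\in\calW_i(t)}\|\tilde w_i\|$, which is exactly the componentwise radius defining $\calD(t)$. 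Hence $f(x)-\hat f(x,t+k)\in\calD(t)$, and left-multiplication by $BB^\dagger$ places the term in $BB^\dagger\calD(t)$, completing the membership argument.

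It remains to establish the nesting $\calDhat(t)\subseteq\calDhat(t-1)$. I would argue this componentwise on the constituent sets: $\calFhat(t)=\bigcap_{i=0}^t\calF(i)\subseteq\calFhat(t-1)$ since it is an intersection over strictly more sets, and $\calD(t)\subseteq\calD(t-1)$ because \cref{as:shrinking} gives $\calW_i(t)\subseteq\calW_i(t-1)$ and thus a smaller max-norm radius. Since $\calV$ is fixed and both the linear maps and the Minkowski sum are monotone with respect to set inclusion, these two inclusions propagate to $\calDhat(t)\subseteq\calDhat(t-1)$. I do not expect a genuine obstacle here: the argument is essentially a bookkeeping exercise combining Cauchy--Schwarz, the unit feature bound, and the shrinking-confidence assumption. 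The one point requiring care -- and the natural place to make an indexing error -- is ensuring the bound is applied at the future estimate $\hat f(\cdot,t+k)$ rather than the current one $\hat f(\cdot,t)$, which is precisely what forces the use of the nesting $\calW_i(t+k)\subseteq\calW_i(t)$ rather than a bound that holds only at time $t$.
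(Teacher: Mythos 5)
Your proof is correct and follows essentially the same route as the paper's: bound the estimation-error term componentwise via Cauchy--Schwarz, the unit feature bound, and the nesting $\calW_i(t+k)\subseteq\calW_i(t)$ to place it in $\calD(t)$; invoke \cref{cor:unmatched-support} for the unmatched term; and propagate the monotone inclusions $\calFhat(t)\subseteq\calFhat(t-1)$, $\calD(t)\subseteq\calD(t-1)$ through the Minkowski sum. The only cosmetic difference is that the paper bounds $\hat f - f$ and appeals to the symmetry of $\calD(t)$, whereas you bound $f-\hat f$ directly, which is equivalent.
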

\begin{proof}
At any state~$x \in \calX$ and time $t + k$, let $z = \hat{f}(x, t+k) - f(x) = (\what(t+k) - W)\phi(x) = \wtilde \phi(x)$ for some $\wtilde \in \calW(t+k)\subseteq \calW(t)$. Then
\begin{align}
    |z_i| &= |\tilde{w}_i^\top \phi(x_t)| \leq \|\tilde{w}_i\| \leq \max_{\tilde{w}_i \in \calW_i(t)} \|\tilde{w_i}\|. \nonumber
\end{align}
So $\hat{f}(x, t+k) - f(x) \in \mathcal{D}(t)$ for all $k \in \mathbb{N}_{\geq 0}$. Since $\calW_i(t) \subseteq \calW_i(t-1)$ by \cref{as:shrinking}, this implies $\mathcal{D}(t) \subseteq \mathcal{D}(t-1)$. Then, by \cref{cor:unmatched-support}, $f(x) \in \calFhat(t)$. Therefore,~$d(t)$ in the closed-loop dynamics \cref{eq:unmatched-error} is contained in $\calDhat(t)$ since $\mathcal{D}(t)$ is symmetric. In addition, since both $\calFhat(t) \subseteq \calFhat(t-1)$ and~$\calD(t) \subseteq \calD(t-1)$, the support of $d(t)$ is nested over time, i.e., $\calDhat(t) \subseteq \calDhat(t-1)$.
\end{proof}

\begin{remark}
If the estimation error is small, \cref{eq:unmatched-dist} shows that the compound disturbances $d(t)$ will only depend on the process noise and the component of the unknown function $f$ that cannot be cancelled using the CE control law \cref{eq:unmatched-ce-law}. \rev{Therefore, if the estimation error is small and a significant component of the uncertainty is matched, we generally expect our method to tolerate nonlinear uncertainty with larger magnitude than a conventional robust MPC scheme.}
\end{remark}

\begin{remark}\label{rem:proj}
We could consider multiple variations on the bounds in \cref{lem:unmatched-support} and \cref{lem:unmatched-disturbance} that would yield equivalent properties of the closed-loop system. 
For example, if a bound on the true range of $f$ is known a priori, we may project $\hat{f}$ into a known box enclosing the support of $f$. 
\end{remark}

\begin{remark}\label{rem:axis-aligned}
\cref{lem:unmatched-support} and \cref{lem:unmatched-disturbance} result in box constraints on the disturbances. Approximations of this form are often more favorable from a practical perspective compared to (operator) norm type bounds, since these box constraints are better able to preserve the relative magnitudes of state variables that represent physical quantities.
\end{remark}

We now use the set $\mathcal{F}(t)$ from \cref{lem:unmatched-support} and the set $\calDhat(t)$ from \cref{lem:unmatched-disturbance} to modify the robust MPC problem \cref{eq:robust-mpc} into
\begin{equation}\label{eq:unmatched-mpc}
\hspace{-1.5em}\begin{aligned}
    \minimize_{\substack{
        \{K_{kj|t}\}_{k=0,j=0}^{N-1,k-1},\\
        \{\bar{u}_{t+k|t}\}_{k=0}^{N-1}
    }}\enspace
    &V_N(\bar{x}_{t+N|t}) + \sum_{k=0}^{N-1} h(\bar{x}_{t+k|t}, \bar{u}_{t+k|t})
    \\
    \subjectto\enspace
    & \bar{x}_{t+k+1|t} = A\bar{x}_{t+k|t} + B\bar{u}_{t+k|t}
    \\& x_{t+k+1|t} = Ax_{t+k|t} + Bu_{t+k|t} + d_{t+k|t}
    \\& u_{t+k|t} = \bar{u}_{t+k|t}
        + {\textstyle\sum_{j=0}^{k-1}}K_{k,j|t}d_{t+j|t}
    \\& x_{k|t} \in \mathcal{X},\ u_{t+k|t} \in \mathcal{U} \ominus B^\dagger\calFhat(t)
    \\& \forall k \in \{0,1,\dots,N-1\}
    \\& \bar{x}_{t|t} = x(t),\ x_{t|t} = x(t),\ x_{t+N|t} \in \mathcal{O}(t)
    \\& \forall \{d_{t+k|t}\}_{k=0}^{N-1} \subset \widehat{\mathcal{D}}(t)
\end{aligned}~.
\end{equation}
Compared to \cref{eq:robust-mpc}, in \cref{eq:unmatched-mpc} we have tightened the input constraints to account for the matching term in the certainty equivalent policy \cref{eq:unmatched-ce-law}. As in standard robust MPC, we assume we can compute a robust control invariant set $\mathcal{O}(t)$ and that we have access to a convex terminal cost function $V_N$.
 
Let $\{u^\star_{t|t}, \dots, u^\star_{t+N-1|t}\}$ be an optimal policy sequence for~\cref{eq:unmatched-mpc}. Online, we solve \cref{eq:unmatched-mpc} at each time step $t \in \mathbb{N}_{\geq 0}$ and choose the robust control term of the \emph{certainty equivalent} control policy \cref{eq:unmatched-ce-law} as the receding horizon feedback law ${u^\star : \mathcal{X} \times \mathbb{N}_{\geq 0} \to \mathcal{U}}$ such that 
\begin{equation}\label{eq:unmatched-robust-pol}
    u^\star(x(t),t) = u_{t|t}^\star.
\end{equation}

\begin{assumption}\label{as:unmatched-termcost}
    The terminal cost $V_N: \R^n \to \R_+$ is a continuous convex Lyapunov function for the nominal dynamics under a policy $u_N(x) = - Kx$. That is, there exists a class-$\mathcal{K}_\infty$ function $\alpha_N(\|x\|) \geq h(x, u_N(x))$ for which
    \begin{equation}
        V_N((A- BK)x) - V_N(x) \leq -\alpha_N(\|x\|),
    \end{equation}
    for all $x \in \R^n$.
\end{assumption}

\begin{assumption}\label{as:unmatched-termconst}
For the policy $u_N(x) = - Kx$ in \cref{as:unmatched-termcost}, the terminal set $\mathcal{O}(t) \subseteq \mathcal{X}$ is a maximal robust positive invariant set for the closed-loop system 
$x(k+1) = (A - BK)x(k) + d(k)$ for $d(k) \in \calDhat(t)$ subject to $x(k) \in \calX$ and $u_N(x(k)) \in \calU \ominus B^\dagger \calFhat(t)$ for all $t\geq 0$. 
\end{assumption}

\cref{as:unmatched-termcost} and \cref{as:unmatched-termconst} are standard and easily satisfied by taking $u_N(x) = -Kx$ and $V_N(x) = x^\top P x$ as the solution to an LQR problem with $h$ as the stage cost. Then, $\mathcal{O}(t)$ can be computed efficiently using the standard algorithms in \cite{BorrelliBemporadEtAl2017}. 

\Cref{lem:unmatched-disturbance,lem:unmatched-support} imply that $\mathcal{O}(t-1) \subseteq \mathcal{O}(t)$ since $\calDhat(t) \subseteq \calDhat(t-1)$ and $\calFhat(t) \subseteq \calFhat(t -1)$. Therefore, the terminal constraint becomes less conservative over time. 

\subsection{Stability}\label{sec:properties}
We prove the stability of our algorithm through a recursive feasibility and input-to-state stability argument.
\begin{theorem}\label{thm:unmatched-recfeas}
Consider the system \cref{eq:problem-dynamics}, a parameter estimator that satisfies \cref{as:shrinking} with features that satisfy \cref{as:bounded-feat} in closed-loop feedback with the \emph{matching certainty equivalent} control law \cref{eq:unmatched-ce-law},\cref{eq:unmatched-robust-pol}. If the tube MPC problem \cref{eq:unmatched-mpc} is feasible at~$t=0$, then for all~$t\geq 0$ we have that \cref{eq:unmatched-mpc} is feasible and the closed-loop system \cref{eq:problem-dynamics},\cref{eq:unmatched-ce-law},\cref{eq:unmatched-robust-pol} satisfies~$x(t) \in \calX$, and $\pi(x(t),t) \in \calU$.
\end{theorem}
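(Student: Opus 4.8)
The plan is to carry out the standard two-part argument for robust MPC — recursive feasibility followed by constraint satisfaction — but adapted to the time-varying disturbance and input-constraint sets $\calDhat(t)$ and $\calU \ominus B^\dagger \calFhat(t)$ that shrink (or at least do not grow) over time. The key structural fact I would lean on is the monotonicity established just before the theorem: by \cref{lem:unmatched-disturbance,lem:unmatched-support} we have $\calDhat(t) \subseteq \calDhat(t-1)$ and $\calFhat(t) \subseteq \calFhat(t-1)$, which in turn gives $\mathcal{O}(t-1) \subseteq \mathcal{O}(t)$. This nesting is exactly what makes the induction go through: a feedback policy that was feasible at time $t-1$ remains feasible at time $t$ because every constraint it had to satisfy has only become \emph{looser}.

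First I would set up the induction on $t$. The base case is the hypothesis that \cref{eq:unmatched-mpc} is feasible at $t=0$. For the inductive step, assume \cref{eq:unmatched-mpc} is feasible at time $t$ with an optimal policy $\{\bar u^\star_{t+k|t}, K^\star_{kj|t}\}$ and associated nominal/realized trajectories. After applying $u_{t|t}^\star$ via \cref{eq:unmatched-ce-law},\cref{eq:unmatched-robust-pol}, the true closed-loop evolves according to \cref{eq:unmatched-error} with a realized disturbance $d(t) \in \calDhat(t) \subseteq \calDhat(t-1)$ (this containment is precisely the conclusion of \cref{lem:unmatched-disturbance}, valid because we operate on the event $\what(t)-W\in\calW(t)$ for all $t$). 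I would then construct a \emph{candidate} feasible solution at time $t+1$ in the usual way: shift the horizon forward by one step, reuse the tail of the optimal disturbance-feedback policy, and append a terminal step using the terminal controller $u_N(x) = -Kx$ from \cref{as:unmatched-termcost,as:unmatched-termconst}.

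The crux is verifying that this shifted candidate is feasible for \cref{eq:unmatched-mpc} at time $t+1$, and here the monotonicity does the heavy lifting. The shifted policy's intermediate steps inherit their state constraints $x_{t+k|t}\in\calX$ and tightened input constraints $u\in\calU\ominus B^\dagger\calFhat(t)$ from the previous solution; since $\calFhat(t+1)\subseteq\calFhat(t)$ we have $\calU\ominus B^\dagger\calFhat(t)\subseteq\calU\ominus B^\dagger\calFhat(t+1)$, so the input constraints are \emph{weakly relaxed} and remain satisfied. The disturbances over which robustness is enforced shrink, $\calDhat(t+1)\subseteq\calDhat(t)$, so the robustified constraints continue to hold for every admissible disturbance realization. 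The only genuinely new requirement is the terminal constraint $x_{t+1+N|t+1}\in\mathcal{O}(t+1)$: the previous solution guaranteed $x_{t+N|t}\in\mathcal{O}(t)\subseteq\mathcal{O}(t+1)$, and then I invoke \cref{as:unmatched-termconst} — the robust positive invariance of $\mathcal{O}(t+1)$ under $x\mapsto(A-BK)x+d$ for $d\in\calDhat(t+1)$ — to conclude that applying the terminal controller one more step keeps the state inside $\mathcal{O}(t+1)$ while respecting the (relaxed) terminal input constraint. This closes the induction and establishes feasibility for all $t\ge 0$.

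I expect the main obstacle to be bookkeeping rather than conceptual: carefully reconciling the shifted prediction indices so that the candidate policy is genuinely causal and admissible, and making sure the terminal invariance step is invoked against the \emph{correct} (time-$t{+}1$) disturbance set. Finally, with feasibility in hand, constraint satisfaction of the realized trajectory is immediate: solving \cref{eq:unmatched-mpc} enforces $x_{t|t}=x(t)\in\calX$ directly, and the first input satisfies $u^\star(x(t),t)-B^\dagger\hat f(x(t),t)=\pi(x(t),t)\in\calU$ because the tightened constraint $u^\star_{t|t}\in\calU\ominus B^\dagger\calFhat(t)$ together with $\hat f(x(t),t)\in\calFhat(t)$ (\cref{cor:unmatched-support}) guarantees the matching term keeps the realized input inside $\calU$. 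This holds on the event $\what(t)-W\in\calW(t)\ \forall t$, which by \cref{as:shrinking} has probability at least $1-\delta$, consistent with the relaxed safety notion in \cref{def:safety}.
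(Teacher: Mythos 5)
Your proposal is correct and follows essentially the same argument as the paper's proof: a shift-and-append induction using the tail of the optimal disturbance-feedback policy plus the terminal controller $u_N$, with feasibility of the candidate guaranteed by the monotonicity $\calFhat(t+1)\subseteq\calFhat(t)$, $\calDhat(t+1)\subseteq\calDhat(t)$, $\mathcal{O}(t)\subseteq\mathcal{O}(t+1)$, and realized input admissibility following from the tightened constraint $u^\star_{t|t}\in\calU\ominus B^\dagger\calFhat(t)$ together with $\hat f(x(t),t)\in\calFhat(t)$.
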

\begin{proof}
Suppose the optimal control problem \cref{eq:unmatched-mpc} is feasible at time $t$, with solution~$\{u^\star_{t|t}(\cdot), \dots, u^\star_{t+N-1|t}(\cdot)\}$. By \cref{lem:unmatched-support} and \cref{cor:unmatched-support} we have that $\hat{f}(x, t +k) \in \calFhat(t)$ for all $k \in \mathbb{N}_{\geq 0}$. Therefore, the time-varying CE control law
\begin{align}\label{eq:proof-pol}
    \pi_{t+k|t}(\cdot) = u^\star_{t+k|t}(\cdot) - B^\dagger \hat{f}(x(t), t + k)
\end{align}
satisfies the input constraints for $t \in [t, t+N-1]$, since \cref{eq:unmatched-mpc} then implies $u^\star_{t+k|t}(\cdot) \in \calU \ominus B^\dagger \calFhat(t)$. Moreover, by \cref{lem:unmatched-disturbance} the disturbance support does not grow in time, i.e., $\calDhat(t+1) \subseteq \calDhat(t) $. Therefore, we have that under policy \cref{eq:proof-pol} the closed-loop trajectory formed by \cref{eq:problem-dynamics},\cref{eq:proof-pol} satisfies $x(t+k) \in \calX$ for all $k \in [0,N]$ and that $x(t+N) \in \mathcal{O}(t)$. Hence, if we apply the CE policy \cref{eq:unmatched-ce-law},\cref{eq:unmatched-robust-pol} at time $t$, then $x(t+1) \in \calX$ and $\pi(x(t), t) \in \calU$.

By \cref{as:unmatched-termconst}, for any $x \in \mathcal{O}(t) \subseteq \mathcal{O}(t+1)$, applying the policy $u_N(x) \in \calU \ominus B^\dagger \calFhat(t)$ implies that $Ax + Bu_N(x) + d \in \mathcal{O}(t+1)$ for any $d \in \calDhat(t)$. Therefore, \cref{cor:unmatched-support} and \cref{lem:unmatched-disturbance} imply the policy sequence $\{u^\star_{t+1|t}(\cdot), \dots u^\star_{t+N-1|t}(\cdot), u_N(\cdot)\}$ is feasible for the tube MPC problem \cref{eq:unmatched-mpc} at time $t+1$. Therefore, if the MPC program \cref{eq:unmatched-mpc} is feasible at time $t=0$, it is also feasible for all $t\geq0$ and the closed-loop system formed by the matching CE law \cref{eq:problem-dynamics}, \cref{eq:unmatched-ce-law}, \cref{eq:unmatched-robust-pol} must robustly satisfy state and input constraints by induction.
\end{proof}

\begin{theorem}\label{thm:unmatched-iss}
Consider a system of the form in \cref{eq:problem-dynamics}, 
a parameter estimator that satisfies \cref{as:shrinking} with features that satisfy \cref{as:bounded-feat} in closed-loop feedback with the \emph{certainty equivalent} control law \cref{eq:unmatched-ce-law},\cref{eq:unmatched-robust-pol}. Let $\calX_N \subseteq \calX$ denote the set of states for which the tube MPC problem \cref{eq:unmatched-mpc} is feasible. Then the closed-loop system is locally input-to-state stable with region of attraction $\calX_N$. 
\end{theorem}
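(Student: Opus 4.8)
The plan is to use the optimal value function of the tube MPC problem \cref{eq:unmatched-mpc} as a time-varying ISS-Lyapunov function and then invoke \cref{thm:iss}. Denote by $J^\star(t, x(t))$ the optimal cost of \cref{eq:unmatched-mpc} solved at time $t$ from state $x(t)$; this is well-defined and finite on $\calX_N$, and by \cref{thm:unmatched-recfeas} the closed-loop trajectory remains in $\calX_N$ whenever $x(0) \in \calX_N$, so it suffices to verify the ISS-Lyapunov inequalities on $\calX_N$. The compound disturbance $d(t)$, which plays the role of the ISS input in the closed-loop dynamics \cref{eq:unmatched-error}, is uniformly bounded since $d(t) \in \calDhat(t) \subseteq \calDhat(0)$ by \cref{lem:unmatched-disturbance}, so the supremum appearing in \cref{def:iss} is finite.

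For the lower bound, since $\bar{x}_{t|t} = x(t)$ the first stage cost contributes $h(x(t), \bar{u}^\star_{t|t}) \geq \lmin{Q}\|x(t)\|^2$ to $J^\star$, which I would take as $\alpha_1(\|x\|) = \lmin{Q}\|x\|^2$; this is the one place where I would need $Q \succ 0$ (or an added detectability argument) to obtain a genuine class-$\mathcal{K}_\infty$ lower bound. For the upper bound, I would note that \cref{eq:unmatched-mpc} is a parametric convex quadratic program, so $J^\star(t, \cdot)$ is convex and continuous in $x$; since the origin is feasible, the nominal trajectory can be held at rest and $J^\star(t,0) = 0$, whence convexity yields a local quadratic upper bound $\alpha_2(\|x\|) = c\|x\|^2$ valid on $\calX_N$. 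The nestedness $\calDhat(t+1) \subseteq \calDhat(t)$, $\calFhat(t+1) \subseteq \calFhat(t)$, and $\mathcal{O}(t) \subseteq \mathcal{O}(t+1)$ enlarges the feasible set of \cref{eq:unmatched-mpc} over time, making $J^\star(t,\cdot)$ pointwise non-increasing in $t$, which lets me take the bounding constants uniform in $t$.

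The decrease condition is the heart of the argument. Given the optimizer at time $t$, I would form the standard shifted candidate at $t+1$: drop the first policy element, shift the remaining nominal inputs, and append the terminal controller $u_N(x) = -Kx$ from \cref{as:unmatched-termcost}. Exactly as in the proof of \cref{thm:unmatched-recfeas}, this candidate is feasible for the $t+1$ problem (feasibility is only helped by the tightening $\calDhat(t+1) \subseteq \calDhat(t)$ and the growth of the terminal set). Telescoping the stage costs and using \cref{as:unmatched-termcost} to absorb the terminal terms through $V_N((A-BK)\bar{x}^\star_{t+N|t}) + h(\bar{x}^\star_{t+N|t}, u_N) - V_N(\bar{x}^\star_{t+N|t}) \leq 0$, the cost of this candidate centered at the predicted nominal state $\bar{x}^\star_{t+1|t}$ is at most $J^\star(t, x(t)) - h(x(t), \bar{u}^\star_{t|t})$. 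Since the realized state satisfies $x(t+1) = \bar{x}^\star_{t+1|t} + d(t)$ by \cref{eq:unmatched-error}, I would re-center the candidate at $x(t+1)$ and bound the resulting change in cost by Lipschitz continuity of the piecewise-quadratic value function, giving $J^\star(t+1, x(t+1)) - J^\star(t, x(t)) \leq -\alpha_3(\|x(t)\|) + \sigma(\|d(t)\|)$ with $\alpha_3(\|x\|) = \lmin{Q}\|x\|^2$ and $\sigma(r) = Lr$.

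The main obstacle I anticipate is making the recentering step rigorous and uniform: I need the value function of \cref{eq:unmatched-mpc} to be Lipschitz on $\calX_N$ with a constant $L$ independent of $t$. Convexity and the multiparametric-QP structure give local Lipschitz continuity on each compact feasible set, and the nestedness of the constraint sets lets me dominate all of these problems by the $t=0$ problem, so a uniform $L$ should exist; carefully tracking this uniformity, together with the $Q \succeq 0$ subtlety in $\alpha_1$ and $\alpha_3$, is the delicate part. With the three inequalities established uniformly on $\calX_N$, the value function $J^\star(t,\cdot)$ is an ISS-Lyapunov function and \cref{thm:iss} yields local input-to-state stability with region of attraction $\calX_N$.
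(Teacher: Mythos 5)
Your proposal is correct and follows essentially the same route as the paper: the optimal value $J^\star_N(t,\cdot)$ of \cref{eq:unmatched-mpc} as a time-varying ISS-Lyapunov function, upper/lower class-$\mathcal{K}_\infty$ bounds from the quadratic cost, the shifted candidate $\{u^\star_{t+1|t},\dots,u^\star_{t+N-1|t},u_N\}$ for the decrease condition, and a continuity argument to absorb the recentering from $\bar{x}^\star_{t+1|t}$ to $x(t+1)$ into a $\sigma(\|d(t)\|)$ term. The only cosmetic difference is that the paper applies uniform continuity to the candidate cost $\bar{J}(t,\cdot)$ (citing \cite[Lem.~1]{LimonAlamoEtAl2009}) rather than Lipschitz continuity of $J^\star$ itself, and handles the bounds $\alpha_1,\alpha_2,\alpha_3$ by citation where you correctly flag the $Q\succeq 0$ subtlety.
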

\begin{proof}
Our proof closely resembles \cite[Thm.~2]{BujarbaruahZhangEtAl2020}. 
We argue that the nominal system is stable by a standard MPC argument, and that the closed-loop system is ISS since the disturbances are bounded. Let $J^\star_N(t, x(t))$ be the optimal value of~\cref{eq:unmatched-mpc} associated with the nominal prediction $\{\bar{x}_{t|t}^\star, \dots, \bar{x}_{t+N|t}^\star\}$ and feedback policies $\{u^\star_{t|t}(\cdot), \dots, u^\star_{t+N-1|t}(\cdot)\}$. Since we assume the stage cost is quadratic, there exist two class-$\mathcal{K}_\infty$ functions $\alpha_1, \alpha_2$ such that for all $t\geq0$, $\alpha_1(\|x\|) \leq J^\star_N(t,x) \leq  \alpha_2(\|x\|)$, a class-$\mathcal{K}_\infty$ function $\alpha_3$ such that $h(x,u) \geq \alpha_3(\|x\|)$, and  $J_N^\star(t,0)=0$ (see \cite[Prop.~1]{LimonAlamoEtAl2009}, \cite[Thm.~2]{BujarbaruahZhangEtAl2020}). As in the proof of \cref{thm:unmatched-recfeas}, we have that if we apply the CE control law \cref{eq:unmatched-ce-law},\cref{eq:unmatched-robust-pol} at time $t$, then the policies $\{u^\star_{t+1|t}, \dots, u^\star_{t+N-1|t}, u_N\}$ are a feasible solution for \cref{eq:unmatched-mpc} at time $t+1$. Let $\bar{J}(t, x)$ be the cost associated with forward simulating the nominal system using the policies $\{u^\star_{t+1|t}, \dots, u^\star_{t+N-1|t}, u_N\}$ with $x$ as initial condition.  i.e., set $u^\star_{t+N|t} = u_N(\cdot)$ and let $\bar{x}_{t+1|t+1} = x$, $\bar{x}_{k+1|t+1} = A\bar{x}_{k|t+1} + Bu^\star_{k|t}(\bar{x}_{k|t+1})$ for $k \in \{t+1, \dots, t+N\}$ so that
\begin{equation}
    \bar{J}(t,x) = \sum_{k=t+1}^{t + N}h(\bar{x}_{k|t+1}, u^\star_{k|t}(\bar x_{k|t+1})) + V_N(\bar{x}_{t+N+1|t+1}). \nonumber
\end{equation}
This gives that $J^\star_N(t+1, x(t+1)) \leq \bar{J}(t, x(t+1))$. Moreover, since the stage cost is quadratic and by \cref{as:unmatched-termcost}, $\bar{J}(t,x)$ is uniformly continuous in $x$ for all $t\geq 0$ on the state space since the inputs are constrained in a compact set. It follows that for $x_1, x_2 \in \calX$, there exists a $\mathcal{K}_\infty$ function $\alpha_J$ such that for all $t\geq0$, $|\bar{J}(t,x_1) - \bar{J}(t,x_2)| \leq \alpha_J(\|x_1 - x_2\|)$ (see \cite[Lem.~1]{LimonAlamoEtAl2009}). Therefore, 
\begin{equation}\begin{aligned}
        &J^\star_N(t + 1, x(t+1)) - J^\star_N(t, x(t)) 
        \\
        &\leq \bar{J}(t, x(t+1)) - J^\star_N(t, x(t)) 
        \\
        &= \bar{J}(t, x(t+1)) - \bar{J}(t, \bar{x}^\star_{t+1|t}) + \bar{J}(t, \bar{x}^\star_{t+1|t}) - J^\star_N(t, x(t)) 
        \\
        &\leq |\bar{J}(t, x(t+1)) - \bar{J}(t, \bar{x}^\star_{t+1|t})|
            - h(x_t, \bar{u}^\star_{t|t}(x(t))) 
        \\
        &\leq \alpha_J(\|d(t)\|) - \alpha_3(\|x(t)\|).
\end{aligned} \nonumber
\end{equation}
So the system is ISS by \cref{thm:iss}.
\end{proof}
\begin{remark}
The ISS result in \cref{thm:unmatched-iss} does not explicitly show that improvements in the confidence of the model lead to better performance of the controller, since we only assume the model confidence is non-decreasing in \cref{as:shrinking}. 
\end{remark}
In adaptive control, stronger guarantees of performance improvement are typically made under \emph{persistence of excitation} assumptions \cite{SlotineLi1991}.

\begin{remark}\label{rem:exogenous}
\rev{The only property of the feature map $\phi$ that we relied on to prove Lemmas \ref{lem:unmatched-support}, \ref{lem:unmatched-disturbance} was the boundedness assumption $\|\phi(x) \| \leq 1$. Therefore, Lemmas \ref{lem:unmatched-support}, \ref{lem:unmatched-disturbance} and \cref{cor:unmatched-support} also hold when the feature map depends on an additional \emph{exogenous} signal $z(t) \in \mathcal{Z} \subseteq \R^p$ beside the system state $x(t)$, as long as $\|\phi(x, z)\| \leq 1$ for all $x\in\calX$ and $z \in  \mathcal{Z}$. This implies that Theorems \ref{thm:unmatched-recfeas} and \ref{thm:unmatched-iss} also hold when the bounded feature map $\phi$ depends on an exogenous signal $z(t)$. Therefore, we can easily incorporate observable time-varying environmental effects that we know influence the dynamics, even when it is unclear how these exogenous signals evolve over time, reminiscent of approaches in parameter-varying control \cite{balas_lpv}. We take this approach on the example in \cref{sec:exp-cruise}.}
\end{remark}
\subsection{Iterative/Episodic Learning}\label{sec:episodic}

\rev{Oftentimes, a controller is used to perform the same task repeatedly over many iterations (or episodes). We can apply the algorithm we developed in \cref{sec:unmatched} to such an iterative learning (IL), or episodic, setting as long as the tube MPC problem \cref{eq:unmatched-mpc} is feasible at the start of each iteration. In this subsection, we take advantage of the properties of the approach presented in \cref{sec:unmatched} to reduce the amount of online computation required and update some of the MPC problem parameters only offline between episodes, generalizing our method to a \emph{family} of algorithms suited to episodic operation. }

\begin{definition}
\rev{In an \emph{iterative} (or \emph{episodic}) setting, the system starts from the fixed initial condition $x^{(j)}(0) = x(0) \in\calX$ at $t=0$ and evolves for a finite duration $T^{(j)} \in \mathbb{N}_{> 0}$ for each iteration $j\in\mathbb{N}_{\geq 0}$. We use the notation $x^{(j)}(t)$ to denote the state at time $t$ of iteration $j$, and apply this notation to time-varying quantities in general, e.g., we use the notation $u^{(j)}(t)$ for the input into the system.}
\end{definition}

\rev{Updating the sets $\calFhat$,~$\calDhat$, and in particular  $\mathcal{O}$ online can require significant computation. Therefore, it is attractive to only update these sets offline between iterations in an episodic setting, depending on available computational resources. We summarize the complete online method we developed in \cref{sec:unmatched} as Algorithm \ref{alg:ce-mpc}.A and introduce two episodic variants as Algorithms \ref{alg:ce-mpc}.B-C that reduce online computation by keeping $\mathcal{O}$ fixed throughout an iteration. In particular,
\begin{itemize}
    \item \textbf{Alg.~\ref{alg:ce-mpc}.B} updates $\calFhat^{(j)}(t)$,~$\calDhat^{(j)}(t)$ at each timestep $t$ online and keeps $\mathcal{O}^{(j)}(t)$ fixed as $\mathcal{O}^{(j)}(0)$ during the $j$-th episode.
    \item\textbf{Alg.~\ref{alg:ce-mpc}.C} keeps $\calFhat^{(j)}(t)$,~$\calDhat^{(j)}(t)$, and $\mathcal{O}^{(j)}(t)$ fixed as $\calFhat^{(j)}(0)$,~$\calDhat^{(j)}(0),$ and $\mathcal{O}^{(j)}(0)$ during the $j$-th episode. 
\end{itemize}
It is straightforward to show that we retain the recursive feasibility and stability guarantees in \cref{thm:unmatched-recfeas} and \cref{thm:unmatched-iss} when we apply the episodic variants of Algorithm \ref{alg:ce-mpc}.A. }

\begin{algorithm}[t]\label{alg:ce-mpc}
\begin{algorithmic}[1]
\REQUIRE \rev{initial estimate $\what(0)$}
\REQUIRE \rev{initial $1-\delta$ confidence interval $\calW(0)$}
\REQUIRE \rev{terminal cost $V_N$ and feedback gain $K$ as in \cref{as:unmatched-termcost}}
\STATE \rev{Initialize $\calFhat$, $\calDhat$, $\mathcal{O}$ using \cref{eq:calfhat-def}, \cref{eq:unmatched-dist}, and e.g., \cite[Alg. 10.4]{BorrelliBemporadEtAl2017}.}
\FOR{\rev{$j = 0, 1, \dots, \mathrm{episodes}$}}{
    \STATE \rev{Set initial condition $x^{(j)}(0) = x(0)$.}
    \FOR{\rev{$t = 0, \dots, T^{(j)}$}}{
        \STATE \rev{Compute $u^\star$ as \cref{eq:unmatched-robust-pol} using \cref{eq:unmatched-mpc}.}
        \STATE \rev{Apply the matching CE policy $\pi$ \cref{eq:unmatched-ce-law}.}
        \STATE \rev{Observe the next state $x^{(j)}(t+1)$.}
        \STATE \rev{Update estimate $\hat{f}$ and confidence interval $\calW$.}
        \STATE \rev{ 
            \textbf{Alg. \ref{alg:ce-mpc}.A:} Update $\{\calFhat$, $\calDhat$, $\mathcal{O}\}$.\\
            \textbf{Alg. \ref{alg:ce-mpc}.B:} Update $\{\calFhat, \calDhat\}$.\\ 
            \textbf{Alg. \ref{alg:ce-mpc}.C:} Continue.}
    }\ENDFOR
    \STATE \rev{
        \textbf{Alg. \ref{alg:ce-mpc}.A:} Continue.\\
        \textbf{Alg. \ref{alg:ce-mpc}.B:} Update $\mathcal{O}$.\\
        \textbf{Alg. \ref{alg:ce-mpc}.C:} Update $\{\calFhat$, $\calDhat$, $\mathcal{O}\}$ using $\{\hat{f}^{(j)}(\cdot, t), \calW^{(j)}(t)\}_{t=0}^{T^{(j)}}$.}
}\ENDFOR
\end{algorithmic}
\caption{Adaptive CE MPC with online or episodic updates}
\end{algorithm}

\begin{lemma}\label{lem:episodic}
\rev{Consider episodic control of the system \cref{eq:problem-dynamics} in closed-loop with Algorithm \ref{alg:ce-mpc}.B or Algorithm \ref{alg:ce-mpc}.C from the initial condition $x(0)$. If the tube MPC problem \cref{eq:unmatched-mpc} is feasible at $t = 0$ at iteration $j=0$, then for all $t\geq 0$ at iterations $j \geq 0$ it holds that \cref{eq:unmatched-mpc} is feasible, the closed-loop system satisfies $x^{(j)}(t) \in \calX$ and $\pi^{(j)}(x^{(j)}(t),t) \in \calU$, and the optimal cost functions $J^{\star,B,(j)}_N(t, x(t))$ and $J^{\star, C,(j)}_N(t, x(t))$ for \cref{eq:unmatched-mpc} are ISS-Lyapunov functions for the closed-loop systems formed by Algorithm \ref{alg:ce-mpc}.B  and Algorithm \ref{alg:ce-mpc}.C  respectively.}
\end{lemma}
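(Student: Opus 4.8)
The plan is to reduce both claims to the already-established recursive feasibility and ISS results of \cref{thm:unmatched-recfeas} and \cref{thm:unmatched-iss}, exploiting the monotonicity properties that underpin those proofs. Recall from \cref{cor:unmatched-support} and \cref{lem:unmatched-disturbance} that, within a single episode $j$, the sets shrink monotonically, $\calFhat^{(j)}(t) \subseteq \calFhat^{(j)}(0)$ and $\calDhat^{(j)}(t) \subseteq \calDhat^{(j)}(0)$ for all $t \geq 0$, which in turn, together with \cref{as:unmatched-termconst}, yields $\mathcal{O}^{(j)}(0) \subseteq \mathcal{O}^{(j)}(t)$. Thus both Algorithm~\ref{alg:ce-mpc}.B and Algorithm~\ref{alg:ce-mpc}.C simply replace the online sets used in \cref{eq:unmatched-mpc} by conservative, frozen versions for the duration of the episode, and the entire task is to show that this additional conservatism preserves feasibility and the descent inequality.

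First I would establish recursive feasibility and constraint satisfaction within an episode. The crucial step is to verify that the frozen terminal set $\mathcal{O}^{(j)}(0)$ remains robust positive invariant for the dynamics actually encountered at a later time $t$. Because $\calDhat^{(j)}(t) \subseteq \calDhat^{(j)}(0)$ and $\calU \ominus B^\dagger\calFhat^{(j)}(t) \supseteq \calU \ominus B^\dagger\calFhat^{(j)}(0)$, the set $\mathcal{O}^{(j)}(0)$, which is RPI for the largest disturbance set and tightest input constraint of the episode, is a fortiori RPI for the smaller disturbance set and looser input constraint at time $t$ under the same terminal policy $u_N(x) = -Kx$. With this observation the induction in \cref{thm:unmatched-recfeas} carries over essentially verbatim: the shifted-and-appended candidate $\{u^\star_{t+1|t}, \dots, u^\star_{t+N-1|t}, u_N\}$ is feasible at $t+1$ because its inputs satisfy the looser tightened input constraint, its robust state-constraint satisfaction over $\calDhat^{(j)}(t)$ follows from robustness over the larger $\calDhat^{(j)}(0)$, and its terminal state lands in the RPI set $\mathcal{O}^{(j)}(0)$. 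For Algorithm~\ref{alg:ce-mpc}.C the sets are frozen entirely, so \cref{eq:unmatched-mpc} is time-invariant within the episode and this reduces to the standard robust tube-MPC feasibility argument; Algorithm~\ref{alg:ce-mpc}.B is the intermediate case in which $\calFhat,\calDhat$ are still updated online.

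Next, for the ISS-Lyapunov claim I would replay the argument of \cref{thm:unmatched-iss} with the frozen sets. The $\mathcal{K}_\infty$ sandwich $\alpha_1(\|x\|) \leq J^{\star,\bullet,(j)}_N(t,x) \leq \alpha_2(\|x\|)$ and $J^{\star,\bullet,(j)}_N(t,0)=0$ follow from the quadratic stage cost exactly as before. For the descent inequality, the shifted candidate is feasible at $t+1$ by the step above, so $J^{\star,\bullet,(j)}_N(t+1,x(t+1)) \leq \bar{J}(t,x(t+1))$; uniform continuity of $\bar{J}(t,\cdot)$ on the compact feasible region then gives $J^{\star,\bullet,(j)}_N(t+1,x(t+1)) - J^{\star,\bullet,(j)}_N(t,x(t)) \leq \alpha_J(\|d(t)\|) - \alpha_3(\|x(t)\|)$ with $d(t) \in \calDhat^{(j)}(t) \subseteq \calDhat^{(j)}(0)$, verifying the ISS-Lyapunov conditions and hence ISS by \cref{thm:iss}. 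For Algorithm~\ref{alg:ce-mpc}.C the cost is time-invariant over the episode, which streamlines the continuity step.

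The main obstacle I anticipate is the bookkeeping at episode boundaries rather than anything within an episode. I would close the induction on $j$ by noting that \cref{as:shrinking} persists across episodes (the confidence sets carry over), so $\calFhat^{(j+1)}(0) \subseteq \calFhat^{(j)}(0)$, $\calDhat^{(j+1)}(0) \subseteq \calDhat^{(j)}(0)$, and therefore $\mathcal{O}^{(j+1)}(0) \supseteq \mathcal{O}^{(j)}(0)$. Consequently the feasible region $\calX_N$ of \cref{eq:unmatched-mpc} only grows across episodes, so the fixed initial condition $x(0)$, feasible at $j=0$ by hypothesis, remains feasible at the start of every subsequent episode; this supplies the base case needed to invoke the within-episode argument for each $j$. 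The one subtlety to handle carefully is the direction of the set inclusions in the RPI-preservation step: a terminal set computed for a \emph{larger} disturbance and \emph{tighter} input budget stays invariant when both are relaxed, but not conversely, which is precisely why freezing the sets at their start-of-episode (most conservative) values, rather than at some later value, is what makes the episodic variants safe.
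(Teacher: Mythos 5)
Your proposal is correct and follows essentially the same route as the paper's proof: reduce to \cref{thm:unmatched-recfeas} and \cref{thm:unmatched-iss} by observing that those proofs only used the monotonicity of $\calFhat$, $\calDhat$, and $\mathcal{O}$ within an iteration, note that the start-of-episode terminal set remains RPI under the smaller disturbance sets and looser input constraints encountered later (the ``a fortiori'' direction you flag), and close the induction across episodes via the fixed initial condition and the fact that set updates only enlarge the feasible domain. Your treatment of the episode-boundary bookkeeping is slightly more explicit than the paper's one-line version, but the substance is identical.
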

\begin{proof}
\rev{First, we note that the proofs of \cref{thm:unmatched-recfeas} and \cref{thm:unmatched-iss} only relied on the facts that $\calDhat^{(j)}(t), \calFhat^{(j)}(t)$ do not grow over time during an iteration $j$, i.e., that $\calFhat^{(j)}(t+1) \subseteq \calFhat^{(j)}(t)$ and $\calDhat^{(j)}(t+1) \subseteq \calDhat^{(j)}(t)$, and that $\mathcal{O}^{(j)}(t)$ does not shrink, i.e., that $\mathcal{O}^{(j)}(t) \subseteq \mathcal{O}^{(j)}(t+1)$. }

\rev{Therefore, if we keep $\calFhat^{(j)}(t) := \calFhat^{(j)}(0)$ and $\calDhat^{(j)}(t) := \calDhat^{(j)}(0)$ fixed over iteration $j$, consequently fixing $\mathcal{O}^{(j)}(t) := \mathcal{O}^{(j)}(0)$, \cref{thm:unmatched-recfeas} and \cref{thm:unmatched-iss} still hold, since \cref{cor:unmatched-support} implies that $f(x^{(j)}(t)), \hat{f}^{(j)}(x^{(j)}(t),t) \in \calFhat^{(j)}(0)$ and  \cref{lem:unmatched-disturbance} implies that $d^{(j)}(t) \in \calDhat^{(j)}(0)$ for all $t\geq 0$. Thus, Alg. \ref{alg:ce-mpc}.C is recursively feasible and ISS over an iteration $j$. }

\rev{Moreover, note that $\mathcal{O}^{(j)}(0)$ is an RPI set for the closed-loop system in \cref{as:unmatched-termconst} associated with disturbances in $\calDhat^{(j)}(0)$, state constraints $\calX$, and input constraints $\calU \ominus B^\dagger \calFhat^{(j)}(0)$. Since $\calFhat$ and $\calDhat$ do not grow over time, it follows that $\mathcal{O}^{(j)}(0)$ is also RPI for the closed-loop system in \cref{as:unmatched-termconst} with disturbances in $\calDhat^{(j)}(t)$ and input constraints $\calU \ominus B^\dagger \calFhat^{(j)}(t)$ for all $t\geq 0$. Hence, the proofs of \cref{thm:unmatched-recfeas} and \cref{thm:unmatched-iss} also apply to Algorithm \ref{alg:ce-mpc}.B.}

\rev{Finally, note that updates to $\calFhat$, $\calDhat$, and $\mathcal{O}$ can only increase the feasible domain of \cref{eq:unmatched-mpc}. Therefore, if \cref{eq:unmatched-mpc} is feasible for Algorithms \ref{alg:ce-mpc}.B-C at $t = 0$ and $j=0$, it will be also be feasible at $t=0$ for all $j\geq 0$ since $x^{(j)}(0) = x(0)$ is constant. This proves the lemma.}
\end{proof}

\rev{\cref{lem:episodic} shows that we can safely reduce the amount of online computation required to apply our adaptive MPC in an episodic setting, even though we still adapt the model online in Algorithms \ref{alg:ce-mpc}.A-C.  We can typically update the model estimate efficiently using recursive filters such as those we discuss in \cref{sec:est}, though one could trivially keep the model fixed over an iteration as well. }

\begin{remark}
\rev{In general, we should expect the computational benefits of applying Algorithms \ref{alg:ce-mpc}.B-C to come at the expense of conservatism, since updates to $\calFhat$, $\calDhat$, and $\mathcal{O}$ can increase the size of the feasible set of problem \cref{eq:unmatched-mpc}. However, it is not straightforward to mathematically relate the realized closed-loop costs when we apply Algorithms \ref{alg:ce-mpc}.A-C to each other, since the model estimates depend on the trajectory histories induced by the applied controller.}
\end{remark}

\rev{We emphasize that under our learning desiderata, episodic application of our robust adaptive MPC in combination with an estimator that satisfies \cref{as:shrinking} ensures that we satisfy the safety guarantee \cref{eq:safety} with probability $1 - \delta$ jointly for all episodes. For some estimators, such as the estimator we discuss in \cref{sec:blr}, one could consider re-initializing the initial estimate at each episode using all previously collected data to yield tighter confidence intervals, thereby reducing conservatism. However, by doing so, one can only guarantee that state and inputs are satisfied with probability at least $1-\delta$ for each episode, as opposed to across all episodes.}

  \section{Adaptation Laws \& Learning Algorithms}\label{sec:est}
In this section we describe two common online function approximation schemes---one statistical and one not---that satisfy the decaying confidence interval of \cref{as:shrinking} that we used to construct our ARMPC algorithm.

\subsection{Set Membership Estimation} \label{sec:setmembership}
A common approach in the adaptive MPC literature is to estimate constant, or slowly changing, disturbances through \emph{set membership} estimation \cite{milanese1991optimal, milanese2013bounding}. These estimators maintain a feasible parameter set that is refined as more data becomes available. The feasible parameter set contains all credible model parameters that explain previous observations, which means that the feasible parameter sets are nested over time. We consider learning the parameters of a nonlinear uncertainty model of the form in \cref{eq:func-approx} directly using set-membership estimation. Under the prior knowledge \cref{as:prior-conf}, the initial feasible parameter set is given as $\Theta(0) = \{\what(0)\} \oplus \calW(0)$ and the feasible parameter set at time $t$ is obtained as
\begin{equation}\label{eq:est-setmember-fps}
\begin{aligned}
    \Theta(t) = &\big\{W \in \Theta :\,
        x(k+1) - Ax(k) - Bu(k)  \\
        &- W\phi(x(k)) \in \calV,\, \forall k \in \{0, \dots, t-1\}
    \big\}.
\end{aligned}\end{equation}
When $\Theta(0)$ and $\calV$ are a hyperboxes, this estimator maintains independent feasible sets for each row of $W$ and can be updated recursively in time with polytopical set intersections by rewriting \cref{eq:est-setmember-fps} in terms of the row-wise vectorization of $W$. 
Clearly, $\Theta(t) \subseteq \Theta(t-1)$. As is common practice in the literature \cite{bujarbaruah_semi-definite_2020}, we propose generating a point estimate of the parameters as the \emph{Chebyshev center} of the feasible parameter set:
\begin{align}\label{eq:est-chebycenter}
    \what(t) = \argmin_{\what}\max_{W \in \Theta(t)} \|\what - W\|_F.
\end{align}
By definition, this approach minimizes the worst-case error of the point estimates and is typically straightforward to compute \cite{BoydVandenberghe2004}. Denoting the Chebyshev radius for the feasible parameter set associated with the $i$-th row of $W$ as $r_i(t) = \min_{w} \max_{w_i \in \Theta_i(t)} \|w - w_i\|_2$, we take the confidence interval on $\hat w_i(t) - w_i$ as $\calW_i(t) = \{\tilde w_i : \|\tilde w_i\|_2 \leq r_i(t)\}$. 

By definition, since $\Theta(t) \subseteq \Theta(t-1)$, the Chebyshev radii must be decreasing over time: $r_i(t) \leq r_i(t-1)$. Therefore, a set-membership estimator with point estimates as the Chebyshev center satisfies \cref{as:shrinking} with a risk tolerance $\delta = 0$, thus guaranteeing the safety of the system with probability $1$ according to \cref{def:safety}. 

\subsection{Recursive Bayesian Linear Regression (BLR)}\label{sec:blr}
In the case of Bayesian estimation, we can generate confidence intervals directly from the posterior distribution over parameters if we know the disturbance distribution. We outline this approach under a simple, standard assumption.
\begin{assumption}\label{as:est-subgaussian-noise}
We assume that each entry of the process noise is bounded $v(t) = [v_1(t), \dots,v_n(t)]^\top \in \calV = \{v : |v| \leq \sigma_i\}$, and that each entry $v_i(t)$ is independent of the others. Hence, $v_i(t)$ is sub-Gaussian with variance proxy $\sigma_i^2$.  
\end{assumption}

Under \cref{as:est-subgaussian-noise}, we can essentially treat the noise as both normally distributed for convenient analysis and provide safety guarantees for the algorithm proposed in \cref{sec:approach}. If we place subjective priors over the rows of $W$ of the form $w_i \sim \mathcal{N}(\hat{w}_i(0), \sigma_i^2 \Lambda_i^{-1}(0))$, then the resulting posterior parameter distribution at time $t$ is also Gaussian, $w_i \sim \mathcal{N}(\hat{w}_i(t), \sigma_i^2 \Lambda_i^{-1}(t))$. We then use the mean of the posterior---also corresponding to the maximum a posteriori (MAP) estimate---as a point estimate for control: $\hat{f}(x,t) = \what(t)\phi(x)$. Defining the measurement and prediction at time $t$ as $y(t) := x(t+1) - Ax(t) - Bu(t)$ and $\hat{y}(t):= \what(t) \phi(t)$, the MAP estimate for each row can then be updated with constant complexity in time using the recursive updates   
\begin{equation}\label{eq:predict}
\begin{aligned}
    \hat{w}_i(t+1) &= \hat{w}_i(t) - \frac{(\hat{y}_i(t)- y_i(t))\phi(t)^\top \Lambda_i^{-1}(t)}{1 + \phi(t)^\top \Lambda_i^{-1}(t) \phi(t)} \\
    \Lambda_i^{-1}(t+1) &= \Lambda_i^{-1}(t) - \frac{\Lambda_i^{-1}(t)\phi(t) \phi(t)^\top \Lambda_i^{-1}(t)}{1 + \phi(t)^\top \Lambda_i^{-1}(t) \phi(t)}
\end{aligned}~,
\end{equation}
where we write each entry of $\hat{y}(t)$ as $\hat{y}_i(t) = \hat{w}_i(t)^\top \phi(t)$ and $\phi(t) := \phi(x(t))$. 
We can recover the frequentist ordinary least-squares estimator if we assume a flat prior \cite{deisenroth2020mathematics}, which requires the availability of some amount of prior data to yield the initial values $\what(0)$, $\Lambda(0)$. 
        
Taking a risk tolerance of $\delta \in (0,1)$, we could naively define the confidence interval for the $i$-th row of $\what(t)$ as 
\begin{align}\label{eq:est-lstsq-naive}
    \calW^{\mathrm{naive}}_i(t) := \{\tilde w_i \in \R^n : \tilde w_i^\top \Lambda_i(t) \tilde w_i \leq \sigma_i^2 \chi^2_n(1-\frac{\delta}{n})\},
\end{align}
where $\chi_n^2(1-\delta)$ is the $1 - \delta$ quantile of the chi-square distribution with $n$ degrees of freedom. However, the confidence interval in \cref{eq:est-lstsq-naive} does not capture the fact that we want to certify the safety of the policy for all time with high probability. We cannot achieve this with a single confidence interval of a point estimate at time $t$, as \cref{eq:est-lstsq-naive} ignores the correlations between the model estimates over time. As we discussed in our learning desiderata for robust control, we instead desire confidence intervals such that
\begin{equation}\label{eq:est-lstsq-conf}
    \what(t) - W \in \calW(t),\ \forall t \in \mathbb{N}_{\geq 0},
\end{equation}
with probability at least $1 - \delta$.

\rev{Recent work applied a Martingale argument originating from the Bandits literature to generate such confidence intervals by scaling the naive confidence intervals with a time-varying parameter \cite{lew_safe_2020}. The resulting safety guarantees are subject to assumptions on the calibration of the prior, for which we refer the reader to \cite[Assumption 3]{lew_safe_2020}. This assumption---that the prior contains the true parameter with high probability---is trivially satisfied for flat priors (i.e., when we have access to some data a priori). We also assume this assumption is satisfied for the non-flat priors that we use in this work. We may assume that these priors are constructed either via empirical Bayes (as we discuss in \cref{sec:meta}) or via expert knowledge.} 

\begin{theorem}\label{thm:thomas}
\cite[Thm~1]{lew_safe_2020} For the recursive Bayesian linear filter \cref{eq:predict}, we have the estimation error $\hat{w}_i(t) - w_i \in \calW_i(t)$ for all $t\geq 0$ with probability at least $1-\delta$, where
\begin{equation}\label{eq:bls-conf}
    \calW_i(t) 
    := \{\tilde w_i : (\tilde w_i^\top \Lambda_i(t) \tilde w_i)^{\frac{1}{2}} \leq \sigma_i \beta_t(\delta /n)\},
\end{equation}
with $\beta_t(\delta)$ equal to
\begin{equation}
    \sqrt{2 \log \Big(\frac{\det(\Lambda_i(t))^{1/2}}{\delta \det(\Lambda_i(0))^{1/2}}\Big)} + \sqrt{\frac{\lambda_\mathrm{max}(\Lambda_i(0))}{\lambda_\mathrm{min}(\Lambda_i(t))} \chi^2_n(1-\delta)}.
\end{equation}
\end{theorem}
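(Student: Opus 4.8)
The plan is to reproduce the self-normalized martingale argument of Abbasi-Yadkori et al. that underlies bandit-style confidence sets, specialized to the ridge-regularized (Bayesian MAP) estimator in \cref{eq:predict}. First I would observe that the Sherman--Morrison recursions in \cref{eq:predict} are the closed form of the precision update $\Lambda_i(t) = \Lambda_i(0) + \sum_{s=0}^{t-1}\phi(s)\phi(s)^\top$, and that $\hat{w}_i(t)$ is the associated regularized least-squares estimate with prior mean $\hat{w}_i(0)$ and prior precision $\Lambda_i(0)$. Writing the observation model as $y_i(s) = w_i^\top\phi(s) + v_i(s)$ and substituting this into the estimate yields the exact error decomposition
\begin{equation}
    \hat{w}_i(t) - w_i = \Lambda_i(t)^{-1}\Big(\sum_{s=0}^{t-1}\phi(s)v_i(s)\Big) - \Lambda_i(t)^{-1}\Lambda_i(0)\big(w_i - \hat{w}_i(0)\big).
\end{equation}
Measuring both sides in the $\Lambda_i(t)$-weighted norm and applying the triangle inequality reduces the claim to bounding a \emph{noise-driven term} and a \emph{prior-bias term} separately, each of which will contribute one summand of $\sigma_i\beta_t$.

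For the noise term, set $S_t \coloneqq \sum_{s=0}^{t-1}\phi(s)v_i(s)$ and bound $\|S_t\|_{\Lambda_i(t)^{-1}}$. Since $\phi(s) = \phi(x(s))$ is predictable and $v_i(s)$ is conditionally $\sigma_i$-sub-Gaussian by \cref{as:est-subgaussian-noise}, the exponential process $M_t^\lambda = \exp\!\big(\lambda^\top S_t - \tfrac{\sigma_i^2}{2}\lambda^\top(\sum_s\phi(s)\phi(s)^\top)\lambda\big)$ is a nonnegative supermartingale for each fixed $\lambda$. I would then apply the \emph{method of mixtures}: integrate $M_t^\lambda$ against a Gaussian mixing density with covariance proportional to $\Lambda_i(0)^{-1}$, so that the resulting mixture $\bar{M}_t$ is again a supermartingale starting at $1$ and evaluates in closed form to a Gaussian integral producing the determinant ratio $\det(\Lambda_i(t))/\det(\Lambda_i(0))$. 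Ville's inequality applied to $\bar{M}_t$ then gives, \emph{uniformly over all $t\geq 0$} and with probability at least $1-\delta'$,
\begin{equation}
    \big\|S_t\big\|_{\Lambda_i(t)^{-1}} \leq \sigma_i\sqrt{2\log\Big(\tfrac{\det(\Lambda_i(t))^{1/2}}{\delta'\det(\Lambda_i(0))^{1/2}}\Big)},
\end{equation}
which is exactly the first summand of $\sigma_i\beta_t(\delta')$. The uniformity in $t$ is precisely what the ``for all $t\geq 0$'' requirement of \cref{def:safety} demands, and is the crucial feature distinguishing this from a fixed-time bound like \cref{eq:est-lstsq-naive}.

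For the prior-bias term I would invoke the prior-calibration assumption \cite[Assumption 3]{lew_safe_2020}, under which $(w_i - \hat{w}_i(0))^\top\Lambda_i(0)(w_i - \hat{w}_i(0)) \leq \sigma_i^2\chi^2_n(1-\delta')$ with probability $1-\delta'$ (the exact statement for a calibrated Gaussian prior). Using $\Lambda_i(t)\succeq\Lambda_i(0)$ together with the operator-norm estimate $\lambda_{\mathrm{max}}(\Lambda_i(0)^{1/2}\Lambda_i(t)^{-1}\Lambda_i(0)^{1/2})\leq \lmax{\Lambda_i(0)}/\lmin{\Lambda_i(t)}$, I would bound
\begin{equation}
    \big\|\Lambda_i(0)(w_i - \hat{w}_i(0))\big\|_{\Lambda_i(t)^{-1}}^2 \leq \frac{\lmax{\Lambda_i(0)}}{\lmin{\Lambda_i(t)}}\,\sigma_i^2\chi^2_n(1-\delta'),
\end{equation}
which is the square of the second summand of $\sigma_i\beta_t(\delta')$. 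Combining the two bounds through the triangle inequality, taking $\delta' = \delta/n$, and finally union-bounding over the $n$ rows $i\in\{1,\dots,n\}$ yields $\hat{w}_i(t) - w_i \in \calW_i(t)$ for all $t$ and all $i$ simultaneously with overall probability at least $1-\delta$.

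The main obstacle is the uniform-in-time self-normalized bound: rigorously constructing the mixture supermartingale, verifying the supermartingale property via the Laplace/pseudo-maximization trick, and handling the stopped process so that Ville's inequality delivers a guarantee valid for \emph{all} $t$ at once rather than at a single fixed time. Evaluating the Gaussian mixture integral is where the determinant ratio $\det(\Lambda_i(t))/\det(\Lambda_i(0))$ emerges, so the constants there must be tracked carefully to land exactly on the stated $\beta_t$. By contrast, the prior-bias bound and the union bound over rows are routine, and the remaining algebra reduces to the eigenvalue inequalities above.
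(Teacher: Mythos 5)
The paper does not actually prove this statement: it is imported verbatim by citation as \cite[Thm~1]{lew_safe_2020}, so there is no in-paper proof to compare against. That said, your reconstruction is the standard and correct route to a bound of exactly this form---the error decomposition into a self-normalized noise term $\Lambda_i(t)^{-1}S_t$ plus a prior-bias term $\Lambda_i(t)^{-1}\Lambda_i(0)(w_i-\hat w_i(0))$, the method-of-mixtures supermartingale with Ville's inequality producing the determinant ratio uniformly in $t$, and the eigenvalue estimate $\lambda_{\mathrm{max}}(\Lambda_i(0))/\lambda_{\mathrm{min}}(\Lambda_i(t))$ applied to the calibrated-prior chi-square event---and the two summands of $\beta_t$ transparently correspond to your two terms, so the skeleton is right. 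The one place to be careful, which you flag yourself, is the allocation of the failure probability: the noise event (uniform in $t$) and the prior-calibration event are two separate events per row, so a naive union bound over both and over the $n$ rows would call for $\delta/(2n)$ rather than the $\delta/n$ appearing in \cref{eq:bls-conf}; landing exactly on the stated constants requires either conditioning on the prior-calibration assumption rather than union-bounding it, or following the precise bookkeeping in \cite{lew_safe_2020}. This is a matter of constants, not of the argument's validity.
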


The confidence intervals resulting from \cref{thm:thomas} unfortunately do not immediately satisfy the requirement that $\calW(t+1) \subseteq \calW(t)$ in \cref{as:shrinking} without a \emph{persistence of excitation} or \emph{active exploration} assumption as is made in \cite{mania2020active}. A simple workaround is to update the estimate \cref{eq:predict} fed into to the controller only when the associated confidence intervals \cref{eq:bls-conf} have shrunk, effectively disregarding new data until the system has been excited sufficiently. This approach was shown to perform well in practice \cite{koller_learning-based_2018}. Therefore, we can apply the BLR estimator to guarantee the safety of the system with any desired risk tolerance $\delta \in [0,1]$ according to \cref{def:safety}. Still, future work should explore strategies to guarantee confidence intervals constructed using \cref{thm:thomas} (or other equivalent results) satisfy \cref{as:shrinking} more naturally.  

\subsection{Toy Estimation Example}
We compare the performance of the set-membership estimator \cref{eq:est-chebycenter} with the Bayesian least squares estimator \cref{eq:predict} on the toy problem
\begin{equation}
    y(t) =[w_1, w_2] \begin{bmatrix} \sin(4x_1(t)) \\\tanh(x_2(t))\end{bmatrix} + v(t).
\end{equation}
We set $w = [0.5, 0.5]^\top$, sample $v(t) \iid \mathrm{U}[-0.4, 0.4]$ and generate training samples $x$ with each entry $ x_i(t)\iid \mathrm{U}[-1,1]$. For the set-membership estimator, we set the initial feasible parameter set $\Theta(0) = \{\tilde w : \|\tilde w\|_\infty \leq 1\}$. For the Bayesian least-squares estimator we select a zero-mean prior such that the 95\% confidence interval $\calW(0)=  \{w : \|w\|_2 \leq 1\}$. \cref{fig:estimators-uniform} shows the evolution of the parameter estimates after processing $t = 25$ samples. The set-membership estimator quickly converges to an accurate estimate as measured by the size of the feasible parameter set. In contrast, the BLS estimate is slower to converge with a looser confidence interval.

However, if we add a small bias of $\alpha = 0.05$ to the measurements and run the experiment again, we observe highly undesirable behavior from the set-membership estimator. As shown in \cref{fig:estimators-biased}, the feasible parameter set shrinks to a set that does not include the true parameter. Moreover, in our experiment the feasible parameter set collapsed to the null set after $t=13$ measurements, indicating there were no parameters that could explain the data any longer. In contrast, the BLS estimator performs similar to the well-specified example. These experiments highlight the fragility of the set-membership estimator, which we believe highlights the need for greater consideration of probabilistic estimators such as Bayesian regression in adaptive and learning MPC. Model misspecification in the form of imperfect features and outlier noise samples outside of the disturbance set $\calV$ can generally result in feasible parameter sets that vanish or are erroneous, resulting in the loss of safety guarantees of an adaptive MPC algorithm. \rev{In contrast, our example shows the BLS estimator is more robust to imperfections in the control design, making it a much more practical estimator for real-life applications.}

\begin{figure}[t]
    \centering
        \includegraphics[width=\linewidth]{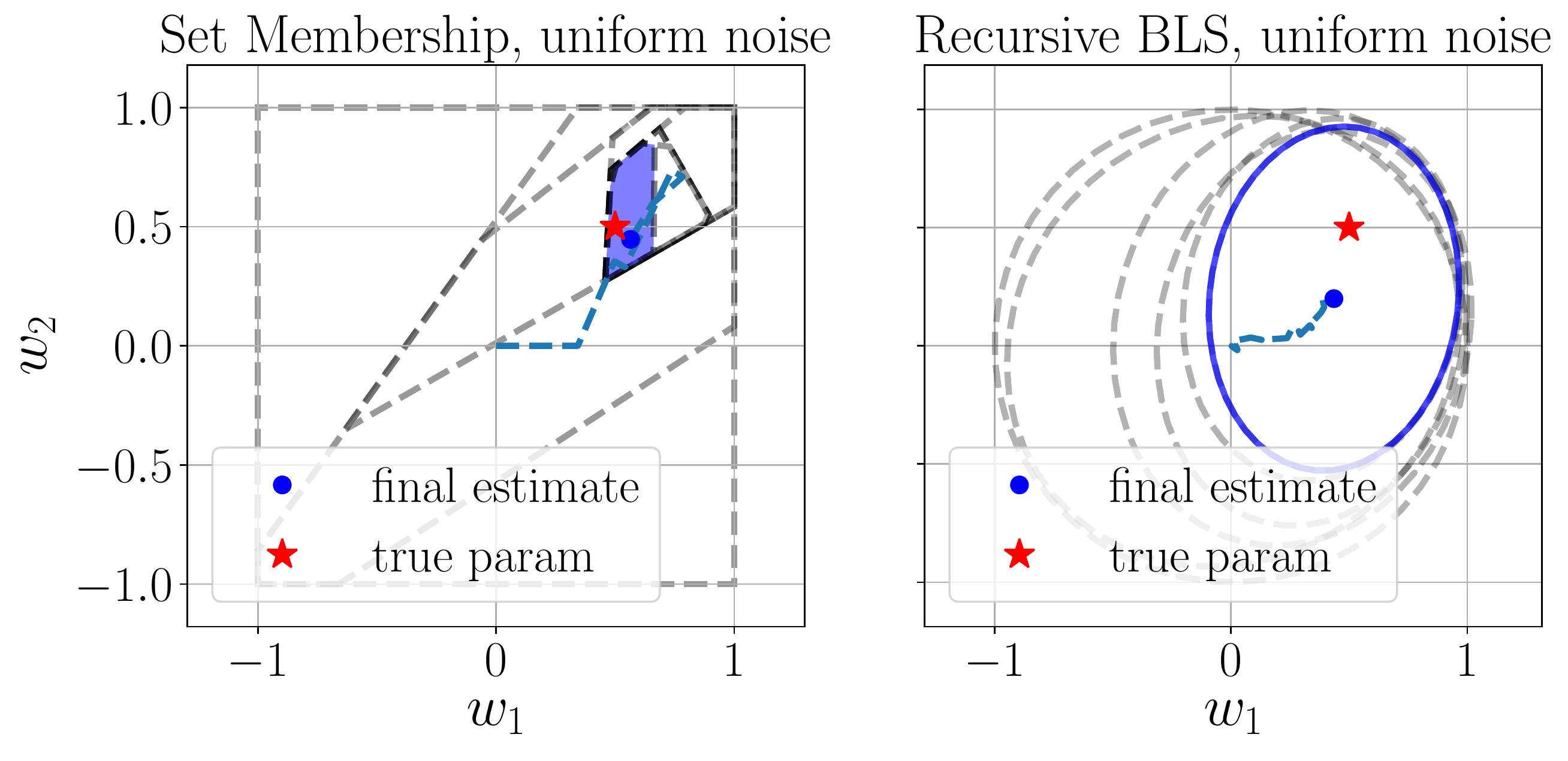}
        \caption{Left: The trajectory of the parameter estimate found using set-membership with feasible parameter sets plotted in gray at 5 sample intervals. Right: The trajectory of the parameter estimate found using BLS with 95\% confidence intervals plotted at 5 sample intervals. }
    \label{fig:estimators-uniform}
\end{figure}

\subsection{Calibrated Priors using Meta-Learning}\label{sec:meta}
\rev{In our problem formulation and throughout \cref{sec:approach}, we considered the uncertain function $f$ in the dynamics \cref{eq:problem-dynamics} as a linear combination of known, fixed basis functions for all time and control iterations. Under this assumption (specifically, \cref{as:bounded-feat}), we designed an adaptive robust MPC that provably guarantees the safety of the system when we use the estimators in \cref{sec:setmembership}-\cref{sec:blr} that satisfy \cref{as:shrinking}. However, $f$ commonly represents the unknown nonlinear effect of an uncertain environment, so a chosen feature representation $\phi$ may not satisfy \cref{as:bounded-feat} exactly in practice. Even if trajectory data in a specific environment is available, a learned feature representation might not capture the nonlinear influence of another environment. }

\begin{definition}
\rev{A \emph{control task} $\mathcal{T} = \{k, x(0), \epsilon\}$ consists of a number of iterations $k \in \mathbb{N}_{\geq0}$, and an initial condition $x(0)$ and parameter $\epsilon \in \R^e$ that represent the \emph{environment}, drawn from a distribution $\epsilon \sim \rho(\epsilon)$. The \emph{environment} $\epsilon$ is constant over a control task and affects the dynamics \cref{eq:problem-dynamics} through the unknown function $f$, so we take the unknown function as $f(x) := f_\epsilon(x)$. }
\end{definition}

\rev{In applications, we typically either do not exactly know how the environment influences the dynamics or we cannot measure $\epsilon$. Thus, a system designer is often left to choose basis functions and encode prior assumptions on the effect size arbitrarily, which may be difficult or inaccurate for complex systems when the controller is deployed in a wide range of operating conditions.}

\rev{Therefore, we propose to cast the problem of learning the structure of the disturbance through the lens of meta-learning algorithms---methods that ``learn to learn''---which have gained prominence in recent years for their ability to learn general representations that can rapidly adapt to new tasks \cite{FinnAbbeelEtAl2017}. These approaches provide an automatic method for identifying features and calibrating priors based on data from the system acting in various environments \cite{harrison_meta-learning_2020}. Moreover, a practitioner could generate such data from simulation, wherein the assumptions on the system dynamics are automatically mapped to numerically convenient basis functions \cite{lew_safe_2020}.}

\begin{figure}[t]
    \centering
        \includegraphics[width=\linewidth]{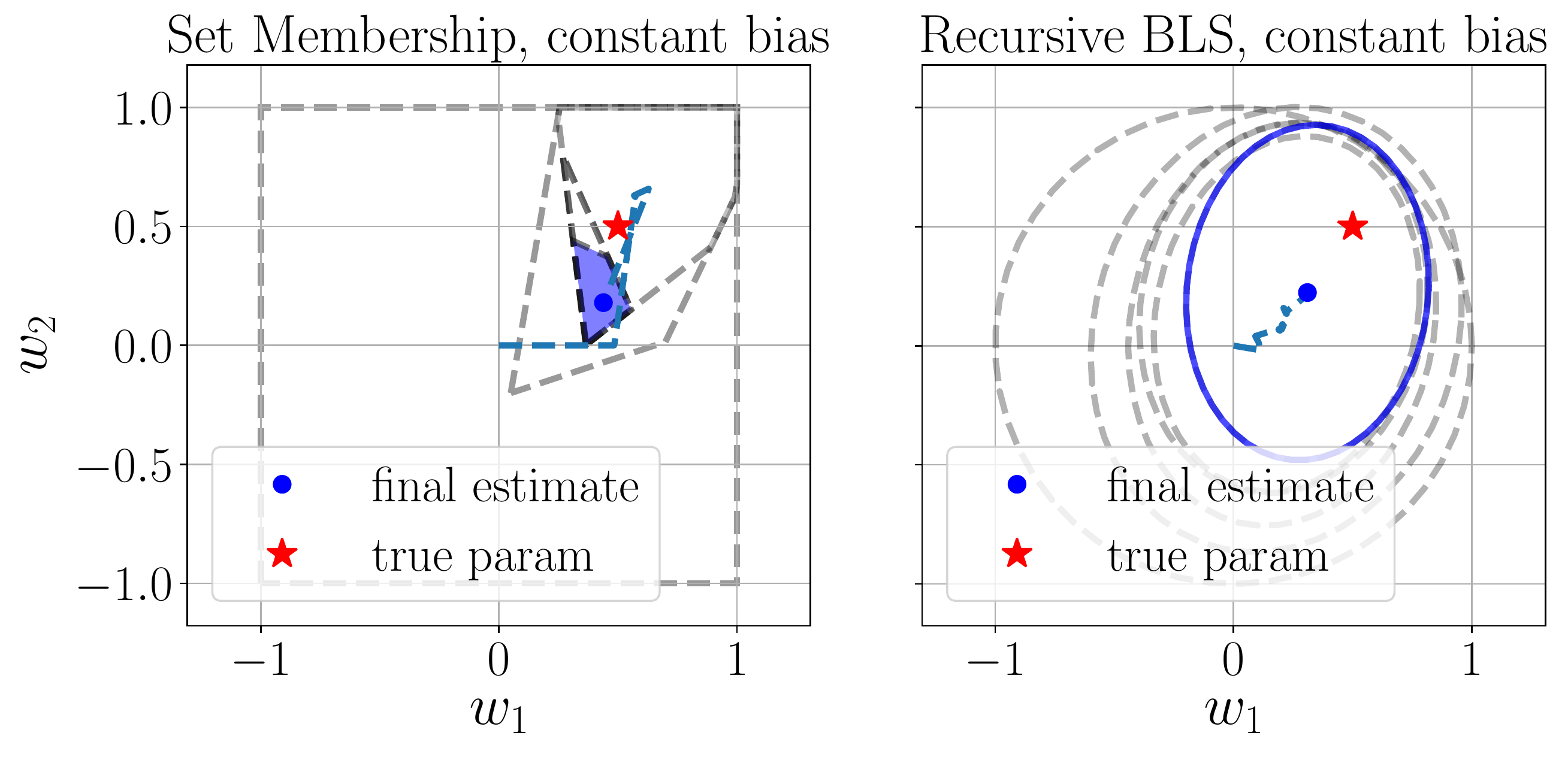}
        \caption{Left: The trajectory of the parameter estimate found using set-membership with feasible parameter sets plotted in gray at 5 sample intervals. Right: The trajectory of the parameter estimate found using BLS with 95\% confidence intervals plotted at 5 sample intervals. }
    \label{fig:estimators-biased}
\end{figure}

\rev{In particular, we propose to leverage the ALPaCA meta-learning algorithm \cite{harrison_meta-learning_2020}. This learning system consists of two components: neural network features, which are fixed for the duration of a control task, and a last layer which is updated online. The model takes the form of a matrix last layer $W \in \R^{n \times d}$, applied to nonlinear neural network features, yielding nonlinear predictive model $\hat{f}(x) = W \phi_\theta(x)$, where $\theta$ denotes the neural network weights. The ALPaCA algorithm learns both the feature representation $\phi_\theta$ and a prior distribution on the last layer $p(W)$ to approximate the distribution over environment uncertainties $f_\epsilon$ induced by $\epsilon \sim \rho(\epsilon)$ as the distribution over models $W\phi_\theta(x)$ induced by the prior $W \sim p(W)$. Following \cite{lew_safe_2020}, we represent the prior with independent Gaussian distributions on each row of $W$.}

\rev{We outline the ALPaCA algorithm's mechanics under the assumption that each control task consists of a single iteration of length $T$ to simplify notation. Let $\mathcal{H}^{(j)} = \{(x^{(j)}(0), u^{(j)}(0)), \dots, (x^{(j)}(T), u^{(j)}(T))\}$ be the trajectory data collected during task $j$. We assume access to a trajectory dataset $\mathcal{H}^{(0)}, \dots, \mathcal{H}^{(k)}$, each associated with a different unknown instantiation $f^{(0)}(\cdot), \dots, f^{(k)}(\cdot)$ of the uncertain function~$f$ in \cref{eq:problem-dynamics} resulting from the changing environments $\epsilon^{(0)}, \dots, \epsilon^{(k)} \iid \rho(\epsilon)$. In the \emph{inner loop} learning algorithm, which is the learning that happens online within one environment, the posterior over $W$ is computed based on an assumption of Gaussian noise, as discussed in \cref{sec:blr}. Then, the \emph{offline outer learning algorithm}  backpropagates through the Bayesian linear regression learning procedure to train the neural network features and learn the prior on $W$. As such, the ALPaCA model learns a set of features that are suited to the problem class expected to occur online, paired with a prior over these features that is calibrated to the environment distribution.}

Feature-based function approximation is a well studied problem in classical adaptive control \cite{SlotineLi1991, IoannouFidan2006}, with methods that generally rely on a small number of hand-crafted basis functions. For applications where the dynamics are poorly understood, such as aircraft wake vortices \cite{lavretsky_workshop}, this makes it necessary to reason about the misspecification of the model even in unconstrained control \cite{IoannouFidan2006, joshi_asynchronous_2020}. Another approach is to use a large number of random features that can approximate any function as the number of features goes to infinity \cite{BoffiTuEtAl2020}. This is undesirable in a constrained control setting, as the confidence intervals we require in \cref{sec:est} scale poorly in the number of features. A meta-learning approach like ALPaCA \cite{harrison_meta-learning_2020} offers a solution to this apparent trade-off, learning an accurate and compact feature representation from data offline. Therefore, we do not consider misspecification of the features in this work, although future work could consider tightening the constraints in the tube MPC \cref{eq:unmatched-mpc} further with an empirical bound on $\|\phi_\theta(x) - \phi(x)\|$  obtained, for example, by a bootstrapping procedure.

\section{Experiments}\label{sec:exp}
In this section, we simulate our approach on several example systems to highlight the various benefits of our methods and illustrate their properties. 

\subsection{Adaptive Robust MPC Benchmark}
\rev{To benchmark our adaptive, robust MPC (ARMPC) algorithm that \emph{matches} as much of the uncertainty as possible, we compare it against the normative approach in ARMPC; we estimate the range of values that the uncertainty $f$ can take and naively treat it as a disturbance using tube MPC as in \cite{bujarbaruah_adaptive_2018, bujarbaruah_semi-definite_2020, soloperto_learning-based_2018, KohlerAndinaEtAl2019}. These methods improve their performance online by refining a non-increasing bound on the range of $f$. To the best of our knowledge, such algorithms have not been proposed for uncertain terms that satisfy \cref{as:bounded-feat} in the literature. Therefore, we apply the tube MPC strategy in \cite{goulart_optimization_2006} by solving \cref{eq:robust-mpc} online using the disturbance set $\calD'(t) = \calFhat(t) \oplus \calV$, since \cref{cor:unmatched-support} implies that both $f(x(t)) + v(t) \in \calD'(t)$ and that $\calD'(t) \subseteq \calD'(t-1)$ for all $t\geq0$. Analogous to our approach, we construct the terminal cost and maximally RPI set $\mathcal{O}(t)$ using $\calD'(t)$ and the fixed (LQR) controller in \cref{as:unmatched-termcost}. It is straightforward to verify the recursive feasibility and input-to-state stability of this adaptive benchmark via a symmetric argument to the proofs of \cref{thm:unmatched-recfeas}, \cref{thm:unmatched-iss}. }

\begin{figure}[t]
    \centering
        \includegraphics[width=\linewidth]{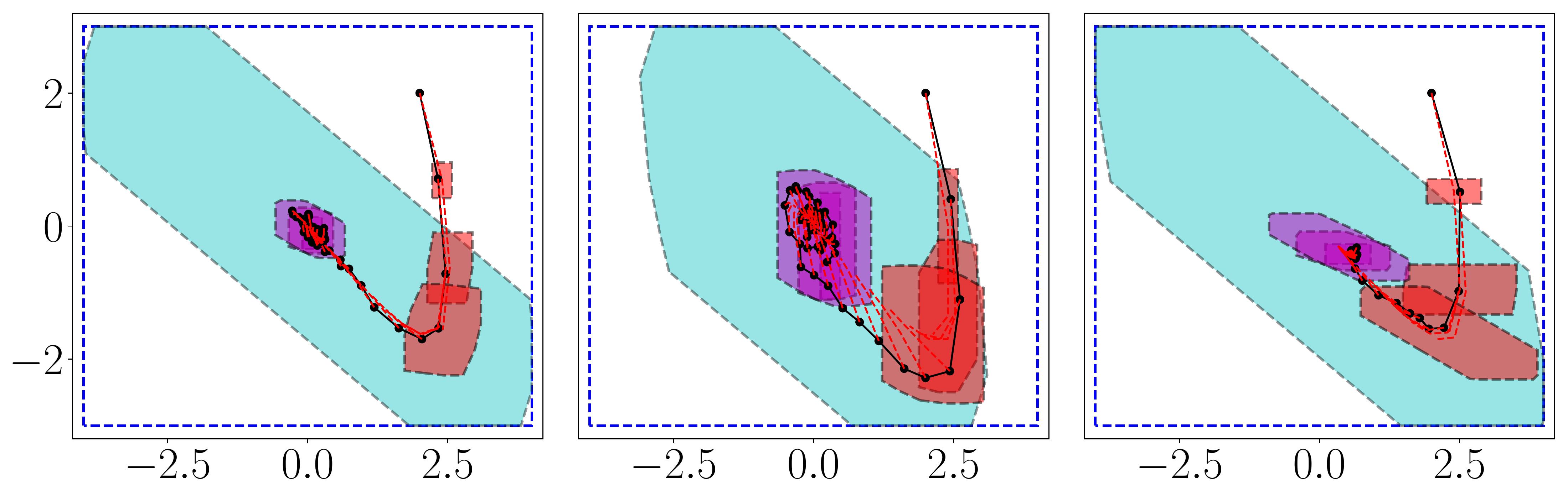}
    \caption{\rev{Phase plots of the closed-loop trajectories of different systems. The realized system evolution is shown in black, the predicted nominal trajectories are depicted in red, as are the predicted reachable sets at $t=0$. The predicted reachable sets at $t=50$ are depicted in purple. The dark blue line indicates the state constraints. The terminal invariant $\mathcal{O}(50)$ is shown in cyan. Left: Adaptive CE MPC on the system with \emph{matched} uncertainty \cref{eq:exp-matched-dyn}. Middle: Benchmark Adaptive MPC on the \emph{matched} system \cref{eq:exp-matched-dyn}. Right:  Adaptive CE MPC on the system with \emph{unmatched} uncertainty \cref{eq:exp-unmatched}.}}  
    \label{fig:trajs}
\end{figure}

\subsection{Double Integrator with Matched Uncertainty}\label{sec:exp-matched}
We illustrate the properties of our algorithm on a double-integrator system
\begin{equation}\label{eq:exp-matched-dyn}
    x(t+1) = \begin{bmatrix}1 & 0.2 \\ 0 & 1 \end{bmatrix}x(t) + \begin{bmatrix}0 \\ 1\end{bmatrix}u(t) + f(x(t)) + v(t),
\end{equation}
\rev{a typical example in the MPC literature that represents simplified second-order dynamics \cite{BorrelliBemporadEtAl2017}.} First, we consider the \emph{matched} uncertainty $f(x) =  [0, w_1]^\top \tanh([0,1]x)$. We estimate the true parameter, $w_1 = 0.5$, online to improve performance. We take the disturbance as an isotropic Gaussian with $\sigma^2 = 5 \times 10^{-3}$ clipped at its 95\% confidence intervals.
In addition, the system is subject to the state and input constraints ${(-4,-3) \preceq x \preceq (4,3)}$ and $-2 \leq u \leq 2$, respectively. We regulate the system to the origin from ${x(0) = [2, 2]^\top}$ while minimizing the quadratic cost function $h(x, u) = x^\top Q x + u^\top Ru$ over a horizon of length $N=3$. We take $Q = I_2$, $R = 1$. \rev{We use the BLR estimator with a flat prior and collect $k = 45$ data points of the system evolution in a unit box near the origin to form an initial estimate of the model parameters and construct $\mathcal{O}(t)$ using Alg. 10.4 in \cite{BorrelliBemporadEtAl2017}.} 

We plot the closed-loop system evolution in \cref{fig:trajs} (left, middle). Our adaptive certainty-equivalent MPC algorithm is able to effectively control the system. In contrast, the benchmark ARMPC can only react to the learned dynamics after $f$ enters the system as a disturbance, resulting in considerably larger closed-loop oscillations and uncertainty on the predicted trajectory. In \cref{fig:trajs}, we also plot the reachable sets associated with the predicted trajectory at the first and last timesteps of the control task. The reachable sets show that our adaptive MPC resolves prediction uncertainty in the system since they shrink over time. \rev{Moreover, we emphasize that online learning does not noticeably improve the performance of the benchmark, as increasing the confidence in the model does not significantly reduce the estimated range of values that the nonlinear function takes.} 

\rev{Furthermore, we see that the shape of the terminal invariant of our adaptive MPC is qualitatively different from that of the benchmark in \cref{fig:trajs} (left, middle). This is because in our method, we tighten the input constraints of the robust input $u^\star$ in the CE law \cref{eq:unmatched-ce-law} to account for the imperfect matching using $\hat{f}$. Our method therefore relinquishes some nominal control authority so that it can account for a smaller disturbance set when we are sufficiently confident in the model.To see this, note that when $\calW = \{0\}$, i.e., when $\what = W$, it holds that} \rev{$\calDhat = (I - BB^\dagger)\calFhat \oplus \calV \subseteq \calFhat \oplus \calV = \calD'$. The simulation shows that matching uncertainty improves the transient convergence behavior of the system.}

In addition, we compare the asymptotic performance of our algorithm with the benchmark ARMPC as a function of~$w_1$, controlling the magnitude of the nonlinearity. To do this, we set the number of data points used to generate the initial model estimate to~$k=10^3$ and collect trajectory rollouts for various values of~$w_1$ from the fixed initial condition. As \cref{fig:cost-envelopes} (left) shows, our ARMPC is feasible for disturbances more than twice the magnitude of those the benchmark ARMPC can tolerate for the given initial condition. In addition, the realized control cost does not differ significantly from the benchmark for values of $w_1$ when both controllers are feasible.  

Finally, we illustrate how our algorithm tolerates larger dynamic uncertainty by comparing the size of the feasible envelope (i.e., the set of initial conditions for which the MPC problem is feasible) as a function of~$w_1$. We set the number of data points to inform our prior to a modest~$k=50$, grid the state-space, and take the feasible region as the convex hull of the initial conditions for which the MPC problem is feasible. Then, we estimate the percentage of states ~$x_0 \in \calX$ in the feasible envelope as the ratio of volumes between the feasible envelope and the state space~$\calX$, illustrated with solid lines in \cref{fig:cost-envelopes} (right). Our CE ARMPC algorithm can tolerate much larger disturbances than the benchmark ARMPC. In these experiments, the feasible envelope of the benchmark becomes empty when the maximal robust invariant is null ($\mathcal{O}(t)=\emptyset$), indicating that there is no subset of~$\calX$ in which the LQR policy associated with the stage cost~$h$ results in provably safe behavior~\cite{BorrelliBemporadEtAl2017}. Hence, \cref{fig:cost-envelopes} highlights the fragility of existing ARMPC approaches under large disturbances.

\begin{figure}[t]
    \centering
        \includegraphics[width=\linewidth]{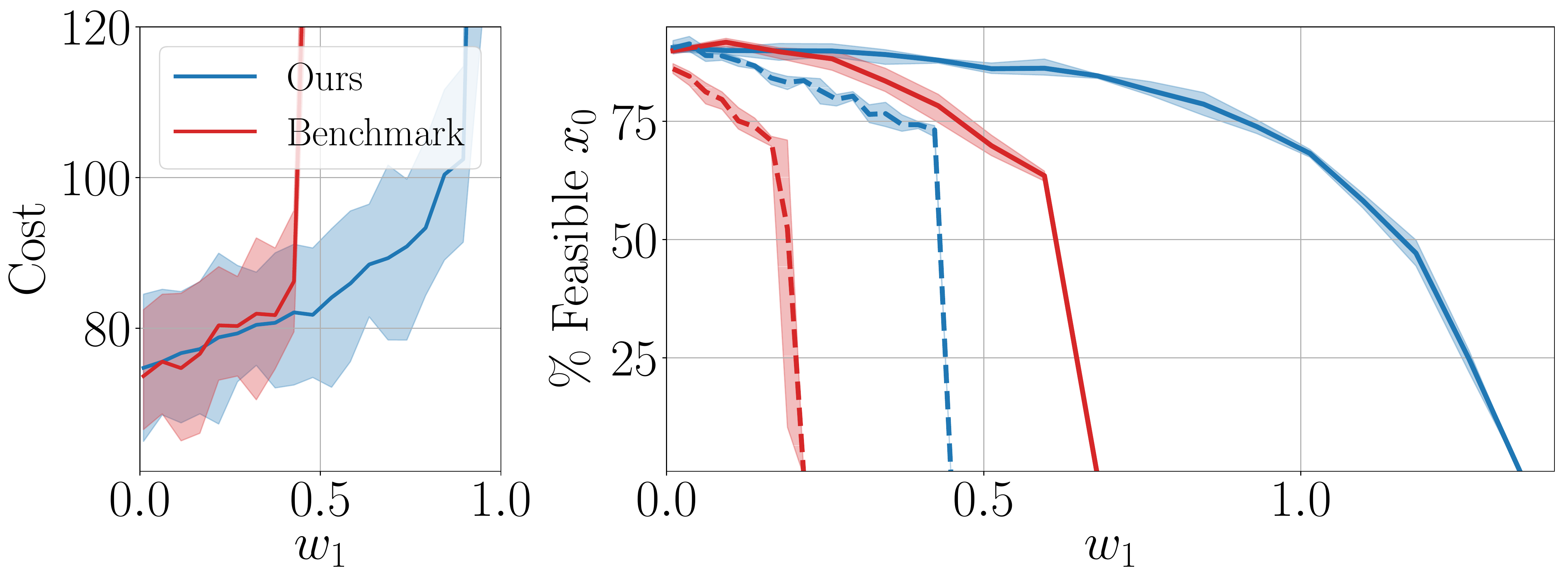}
    \caption{Left: Closed-loop realized trajectory cost for the matched system \cref{eq:exp-matched-dyn} as a function of~$w_1$. Exploding cost indicates infeasibility. The error bars indicate~${2\sigma}$ bounds. Right: Solid lines indicate size of feasible envelopes as a function of $w_1$ for a matched system. Dashed lines indicate the size of the feasible envelopes for the unmatched system \cref{eq:exp-unmatched} as a function of the magnitude of the unmatched dynamics $w_1$ with $w_2=0.5$.}
    \label{fig:cost-envelopes}
\end{figure}

\subsection{Double Integrator with Unmatched Uncertainty}
We now extend the simulations of the simple matched system to the unmatched case to understand the effect of additive nonlinear terms that cannot be canceled from the dynamics. We keep the nominal dynamics identical to \cref{eq:exp-matched-dyn} and take the nonlinear dynamics
\begin{equation}\label{eq:exp-unmatched}
    f(x) = \frac{1}{\sqrt{2}}[w_1 \sin(4x_1), w_2 \tanh(x_2)]^\top,
\end{equation}
where~$w_1$ and~$w_2$ are unknown parameters. Similar to the previous experiments, we initialize the model with~$k=45$ data points sampled around the origin. In this example, the certainty equivalent policy \cref{eq:unmatched-ce-law} can only compensate for the second component of the nonlinear dynamics \cref{eq:exp-unmatched}. We set $w_1 = 0.2$ and $w_2=0.3$. In \cref{fig:trajs} (right), the size of the reachable sets increases if we simulate the system with the unmatched dynamics \cref{eq:exp-unmatched}. The benchmark ARMPC was infeasible from this initial condition, showing that our method still outperforms the benchmark. Next, we fix $w_2=0.5$ and vary $w_1$ to understand the impact of an estimated, unmatched dynamics component. \cref{fig:cost-envelopes} (dashed, right) shows that in our experiment, matching as much of the nonlinear dynamics as possible allows us to handle unmatched dynamics terms of about twice the magnitude as the benchmark. We conclude that our method is a more effective strategy even if the uncertainty is unmatched. Naturally, \cref{fig:cost-envelopes} (right) also shows that the absolute benefit of our method diminishes as the proportion of the nonlinear dynamics~$f(x(t))$ in $\mathrm{Range}(B)$ becomes smaller.

\subsection{Controlling a Planar Quadrotor}
\rev{Next, we simulate a simplified example of a quadrotor in a windy environment. For this example, we examine the episodic setting discussed in \cref{sec:episodic} combined with an application of the meta-learning algorithms in \cref{sec:meta}. We consider a planar version of the quadrotor dynamics for simplicity \cite{Tedrake2021}, with 2D pose $(p_x, p_y, \theta)$ and front and rear thrust inputs $u_f$, $u_r$:}

\rev{
\begin{align*}
    \ddot p_x &= -\frac{1}{m}\sin(\theta)(u_f + u_r) +  \frac{1}{m}f_x(p_x, p_y, \theta)) \\
    \ddot p_y &= \frac{1}{m}\cos(\theta)(u_f - u_r) + \frac{1}{m}f_y(p_x, p_y, \theta)) - g \\
    \ddot \theta &= \frac{l}{I}(u_f - u_r)
\end{align*}
}

\rev{We take the state as $x = [p_x, p_y, \theta, \dot p_x, \dot p_y, \dot \theta]^\top$, linearize the dynamics around~$\bar x=0$,~$\bar u=\frac{mg}{2}[1,1]^\top$, discretize the simulation using Euler's method, and add process noise. The force field $f$ induced by the wind varies spatially, modelling real-world scenarios such as down-wash from another quadrotor. We model the 2D wind disturbance force $f(x) = [f_x(x), f_y(x)]^\top$ as incident at a fixed angle with a maximum velocity $V_w$ that drops of according to an inverse square exponential normal to the direction of incidence $\theta_w$, i.e.,
\begin{equation}
    v_w(x) = V_w\exp\big(-\frac{1}{2}(p_y \cos(\theta_w) - p_x \sin(\theta_w))^2\big),
\end{equation}
resulting in a disturbance force of $\|f(x)\| = c l v_w(x)^2$ along $\theta_w$ for length parameter $l=0.4~\mathrm{m}$ and air resistance $c=0.5~\mathrm{N\cdot s^2 /m^3}$ (we consider the drone's velocity negligible).
We only set constraints on the pose $(p_x,p_y,\theta)$ of the drone. Its linear and angular velocities are unconstrained. The quadrotor is an underactuated system, and therefore the discretized, linearized, simulation has unmatched dynamics terms. Still, similar to the illustration in \cref{fig:drone}, a drone controller can always match disturbance forces along the~$y$ axis in the linearized simulation. }

\begin{figure}[t]
    \centering
    \includegraphics[width=\linewidth]{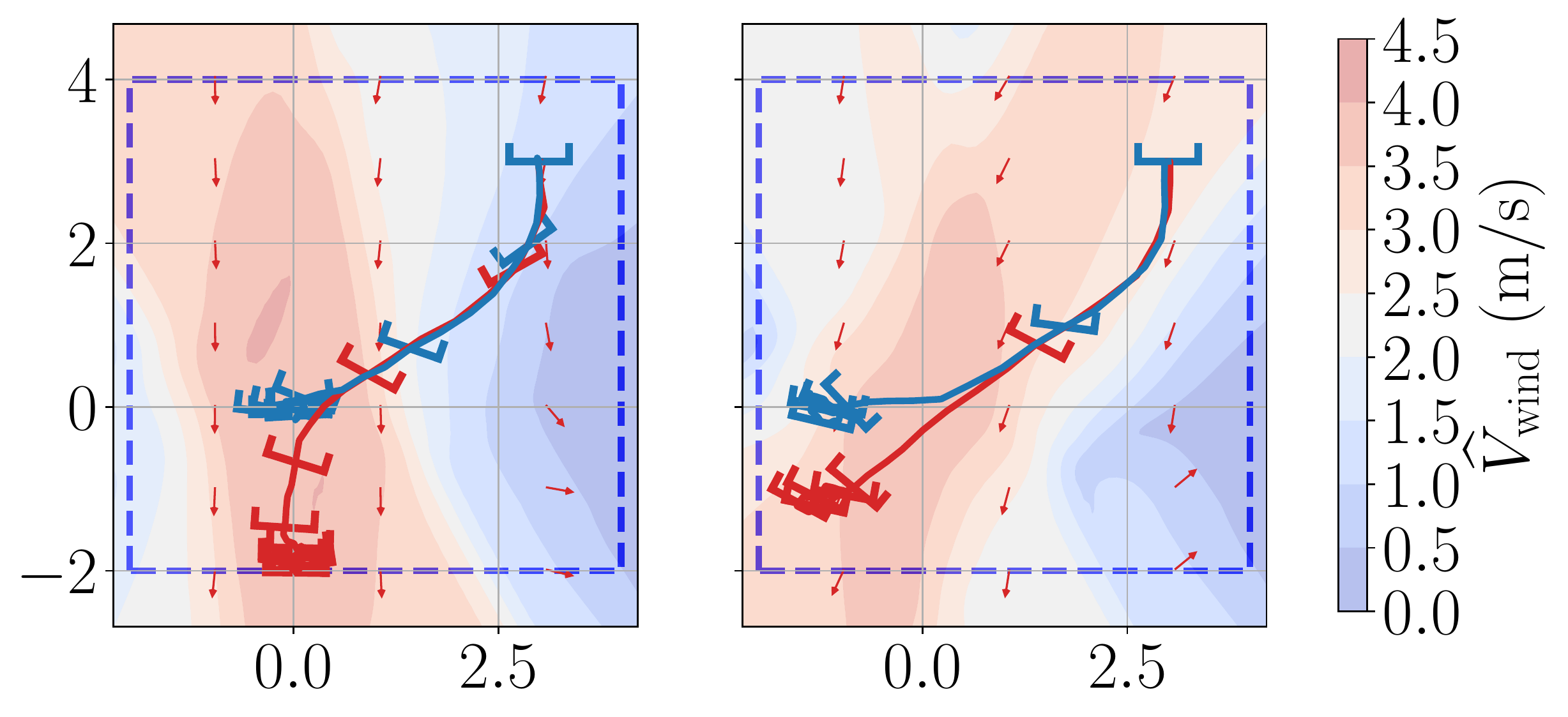}
    \caption{Learned trajectories of our adaptive MPC (blue) compared to a naive tube MPC (red) in the $xy$-plane for a simulated planar quadrotor, with axes in meters. The icons show the orientation of the quadrotor over time. The contours and colorbar indicate the \emph{learned} wind speed in~$\mathrm{m/s}$. Left: Wind comes straight from above. Right: Wind comes at $\theta_w = 22.5^\circ$.}
    \label{fig:drone}
\end{figure}

Spatially varying wind disturbances are challenging to model in practice, so we take a Bayesian approach and model the unknown wind disturbance using a feedforward neural network. To do this, we model the wind disturbance as a linear combination of~$d=5$ sigmoidal output activation functions (scaled by $1/\sqrt{d}$) of a 2 layer network with hidden ReLu activations. 

We generate training data by sampling wind disturbance fields uniformly with wind speeds between $3~\mathrm{m/s}$ and $5~\mathrm{m/s}$ coming in at angles of incidence between $-45^\circ$ and $45^\circ$ relative to the y-axis. We use an ALPaCA model \cite{harrison_meta-learning_2020} to learn an efficient Bayesian representation that can rapidly adapt to any specific conditions encountered in the wild. 



After meta-learning, we control the system for several iterations in the same wind environment using Algorithm \ref{alg:ce-mpc}.C. We compare our adaptive robust MPC with a naive tube MPC similar to the controller in \cref{sec:prelim} that disregards the wind disturbance in the control design, since the support of the estimated wind disturbance was too large for our benchmark ARMPC to be feasible in our experiments. \rev{Therefore, this simulation highlights the fragility of the benchmark approach, which primarily results from non-existence of a terminal invariant set under disturbances of considerable magnitude \cite{BorrelliBemporadEtAl2017}. This phenomenon was also observed previously in applied work \cite{AswaniTomlin2012}, requiring practical workarounds that lose safety guarantees.} 

We run our algorithm with the learned features for $5$ iterations under the same wind conditions. As shown in \cref{fig:drone} (left), if the wind disturbance comes straight from above, the adaptive MPC learns to match the wind forces and reaches the origin quickly. In contrast, the naive tube MPC drifts significantly, as it only reacts to observed disturbances. In addition, if we set the angle of incidence of the wind as~${\theta_w = 22.5^\circ}$, \cref{fig:drone} (right) shows that our approach still achieves decent control performance. \rev{The certainty equivalent controller \cref{eq:unmatched-ce-law} cancels the~$y$-component of the disturbance and converges to a small steady-state offset in the~$x$ direction, showcasing the intuitive behavior of the matching CE controller compared to a naive tube MPC.} In contrast, the benchmark ARMPC algorithm could not guarantee safety for any of the tasks, and a naive unsafe tube MPC that does not consider the wind disturbance at all performs poorly.

\begin{figure}[t]
    \centering
    \includegraphics[width=0.9\linewidth]{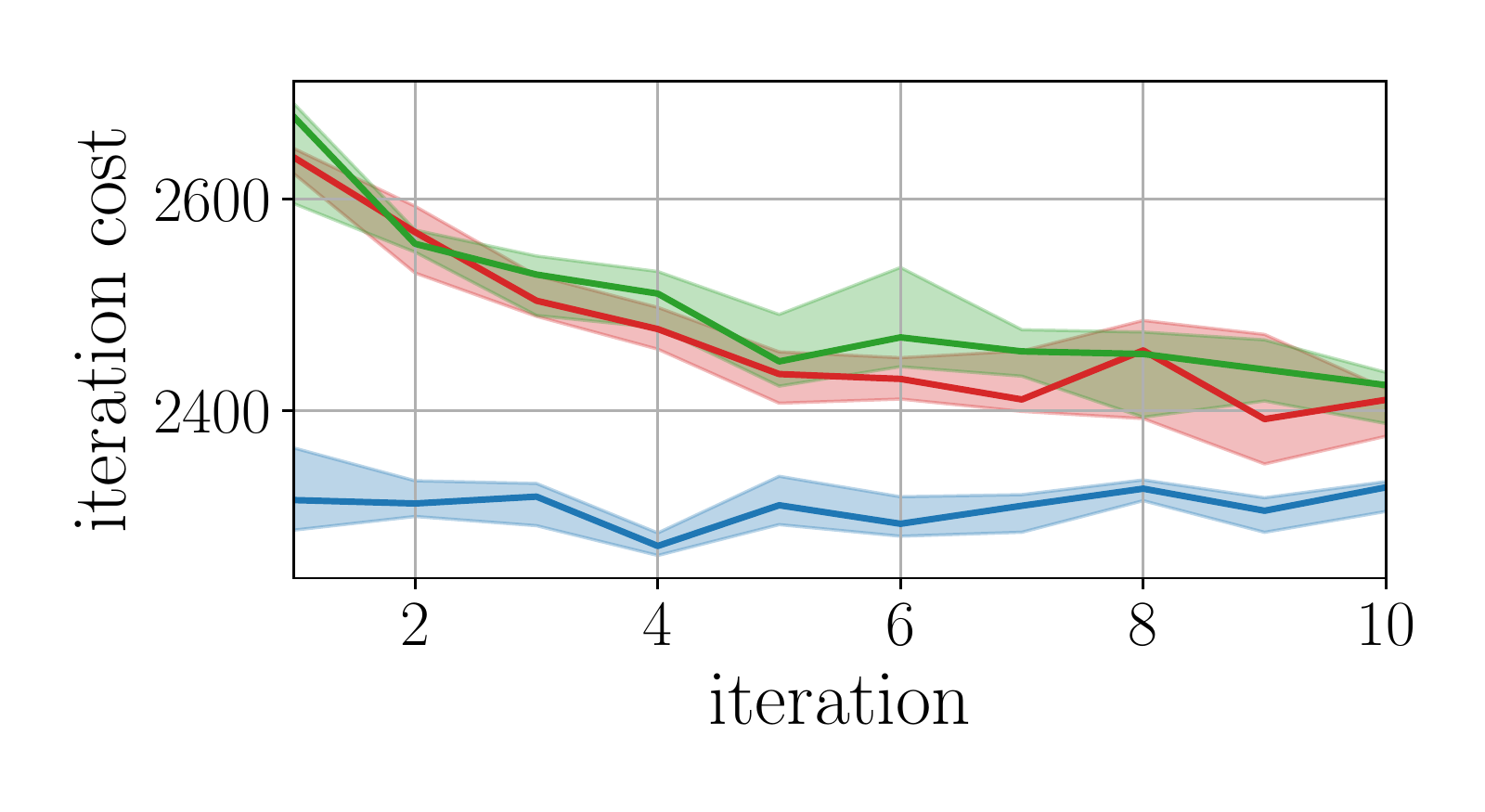}
    \caption{Average closed-loop control cost of the learned adaptive controllers for the simplified drone experiment per control iteration. Error bars indicate $50\%$ quantiles over 10 experiments. Blue: Controller using Bayesian meta-learning. Red: Controller using frequentist meta-learned features. Green: Controller using features that are not meta-learned.}
    \label{fig:meta-curves}
\end{figure}

Finally, we benchmark the efficacy of Bayesian meta-learning algorithms on this problem. We compare the performance of our meta-learned priors against two frequentist counterparts: A frequentist version of ALPaCA that only meta-learns features using a mean-square error (MSE) metric as considered in \cite{OConnellShiEtAl2021}, and a naive model that does not consider the fact that the uncertain function changes between contexts in the training data, directly minimizing the aggregate MSE. To engineer a prior such that our safety guarantees hold for these frequentist models, we fix a zero mean prior with a fixed covariance and pre-train the frequentist models on $k=200$ samples of the test environment---the vertical disturbance in \cref{fig:drone} (left)---such that the bounds on the unknown function match an a priori known bound (as in \cref{rem:proj}, we then also use these bounds in the control design). Our Bayesian model does not receive any additional pre-training data before starting the task. 

As shown in \cref{fig:meta-curves}, the Bayesian meta-learned features and prior result in a model that is able to adapt rapidly to the control task, achieving high performance within a single iteration. This showcases that the ALPaCA meta-learning algorithm can effectively learn a compact feature representation that can quickly learn in new contexts with high confidence. In contrast, the engineered priors on the frequentist models need several trajectory rollouts to drive the cost down.

\subsection{Longitudinal Cruise Control}\label{sec:exp-cruise}
\rev{Finally, we consider the design of a longitudinal cruise controller for an autonomous vehicle (AV). The cruise controller needs to track a fixed reference velocity $v_{\mathrm{ref}}$ and can control the torque $\tau$ applied to the wheels of the vehicle, resulting in the second order dynamics
\begin{equation}
    \ddot p = \frac{1}{m}(R\tau - k \dot p + F_{\mathrm{dist}}(p, \dot p)),
\end{equation}
where $p$ is the position of the car, $m = 10^3 \ \mathrm{kg}$ is its mass, $R = 1/3 \ \mathrm{m}$ is the wheel radius, $k = 20 \ \mathrm{N} \cdot \mathrm{s/m}$ is the friction coefficient, and $F_{\mathrm{dist}}(p, \dot p)$ is an external disturbance force \cite{rajamani2011vehicle}. In this example, we consider the disturbances caused by hilly terrain that the AV passes by on a daily commute, and write the disturbance as induced by an unknown constant incline of $\theta_k$ on the $k$'th road segment $p \in [p_k^1, p_k^2]$ as
\begin{equation}\label{eq:cruise-feat}
\begin{aligned}
    F_{\mathrm{dist}}(p, &\dot p) = \\
    &-\frac{mg}{2}\sum_{k} \sin(\theta_k) \underbrace{\big( \tanh{(p - p_k^1)} + \tanh{(p_{k}^2 - p)}\big)}_{=:2 \sqrt{k}\phi_k(p, \dot p)}
\end{aligned}
\end{equation}
and ignore the air-resistance term presented in \cite{rajamani2011vehicle} for simplicity. We set $\tau = \bar{\tau} - \frac{m}{R} u$, with $u$ as our pseudo-acceleration input into the system around the reference input $\bar\tau = \frac{k}{R}v_{\mathrm{ref}}$, yielding the reference tracking dynamics $x = v_{\mathrm{ref}} - \dot p$ as}
\begin{align*}
    \dot x = - \frac{k}{m} x + u - \frac{1}{m}F_{\mathrm{dist}}(p, v_{\mathrm{ref}} - x).
\end{align*}
\rev{We discretize the dynamics with timestep $dt=.1~\mathrm{s}$ and inject isotropic Gaussian noise with $\sigma^2 = 10^{-3}$ into the position and velocity dynamics. In the spirit of \cref{rem:exogenous}, we treat the position of the vehicle as an \emph{exogenous} signal, regulating and constraining only the velocity and inputs of the vehicle. We collect a $40~\mathrm{s}$ trajectory of training data to initialize a BLR prior over the features defined in \cref{eq:cruise-feat}. In \cref{fig:cruise-control}, we compare the performance of our CE ARMPC with the benchmark, and see that the tracking performance of the controller is improved by learning that the uncertainty in the dynamics is matched.}

\begin{figure}[t]
    \centering
    \includegraphics[width=\linewidth]{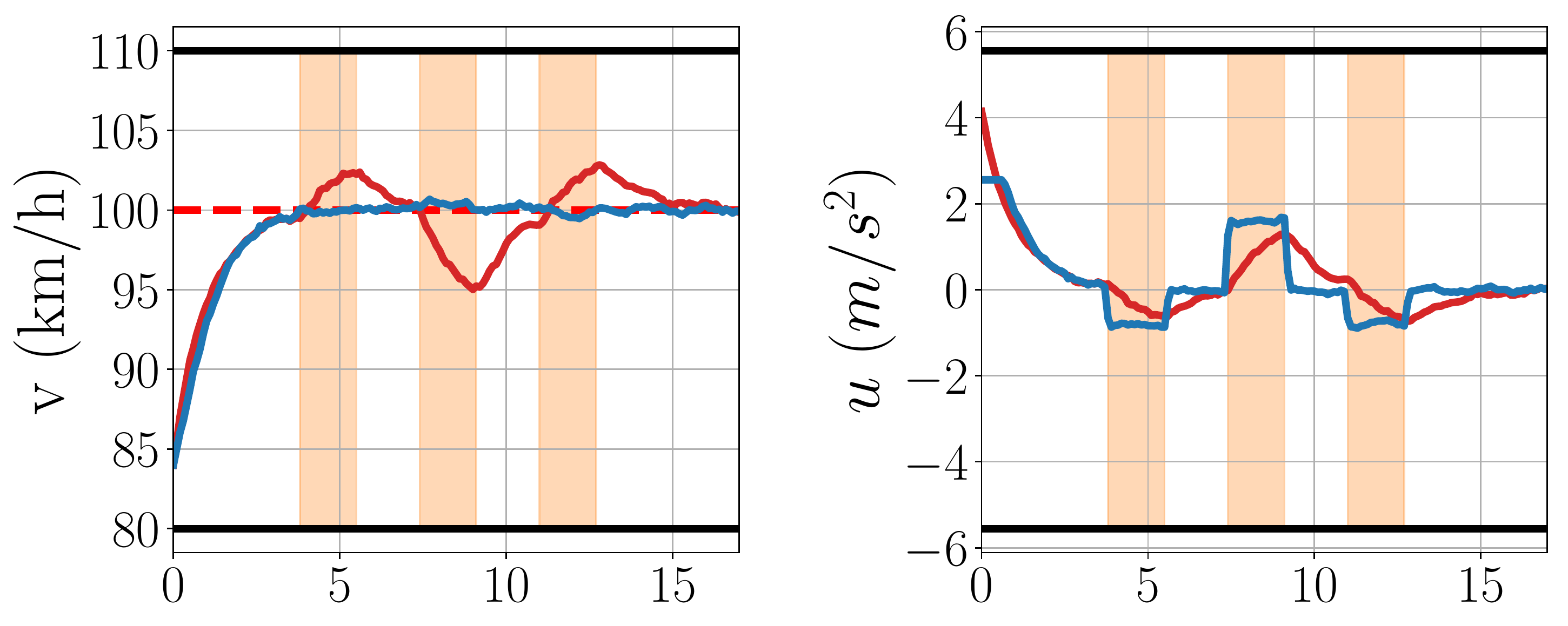}
    \caption{Left: Velocity reference tracking performance over time in seconds. Right: Pseudo-acceleration command over time in seconds. Black: state and input constraints. Red: benchmark ARMPC. Blue: Our CE ARMPC. Red dashed: Reference. Orange: Indicates time spent on inclined road segments.}
    \label{fig:cruise-control}
\end{figure}

\rev{Oscillations in the speed of the vehicle cause passenger discomfort and increase the likelihood of traffic jams. Therefore, we quantify the benefit of our method over the benchmark by comparing the cumulative sum of squared accelerations on the vehicle over the route, a common measure of ride quality similar to ISO standards \cite{svensson2015tuning}. The lower this quantity is, the better the ride quality. \cref{fig:ride-q} shows that our approach improves the ride quality by about $30\%$, a significant improvement. However, \cref{fig:ride-q} also shows that this discrepancy can partially be attributed to the fact that our ARMPC accelerates less aggressively from the initial velocity of $85~\mathrm{km/h}$ in the first $3$ seconds of the simulation. The initial difference in acceleration is caused by the fact that we tightened the MPC's input constraints in \cref{eq:unmatched-mpc} to account for the matching input in the CE policy \cref{eq:unmatched-ce-law}. This is visible in \cref{fig:cruise-control} (right), where the control input applied by our method saturates at a lower constant than the benchmark's inputs in the first second of the simulation. Still, after the initial acceleration, our method clearly outperforms the benchmark on the hilly sections of the route, as \cref{fig:ride-q} shows the relative benefit of our method over the benchmark grows throughout the simulation.} 

\section{Discussion and Conclusions}\label{sec:conc}

The simulations in \cref{sec:exp} show that our method achieves substantial performance improvements compared to existing ARMPC approaches, even when significant components of the nonlinear dynamics are unmatched and cannot be cancelled by a certainty equivalent control policy. We conclude that by extending certainty equivalent control laws from classical adaptive control, we can reduce the conservatism of robust MPC approaches. As a result, we saw in our simulations that our method can tolerate more significant nonlinear terms in the dynamics. In addition, our experiments show that applying the Bayesian meta-learning algorithm ALPaCA \cite{harrison_meta-learning_2020} allows us to learn a feature based representation and a Bayesian prior on the last-layer weights that serve as a sensible engineering solution to satisfy the initial assumptions we make to guarantee the safety of our controller. Moreover, we saw that these models can rapidly adapt to the environmental conditions encountered during deployment.  

Since our control algorithm allows for adaptation laws based on statistical estimation techniques that are more robust to outliers than set-membership estimation, future work should extend our simulations to hardware experiments. In addition, we did not consider a setting in which the uncertain function can change during the control task. Future work can extend our analysis to slowly changing environments, commonly considered using exponential forgetting in adaptive control \cite{SlotineLi1991}. In addition, many systems of interest are inherently nonlinear, therefore, future work should consider applying our method in a constrained nonlinear MPC setting. \rev{Furthermore, as we have assumed in this work, exact measurement of the state is often not possible in applications. Extending learning-based MPC methods for nonlinear systems to the output feedback setting remains a highly complex and largely open problem, which we hope to investigate in future work.}

\begin{figure}[t]
    \centering
    \includegraphics[width=0.9\linewidth]{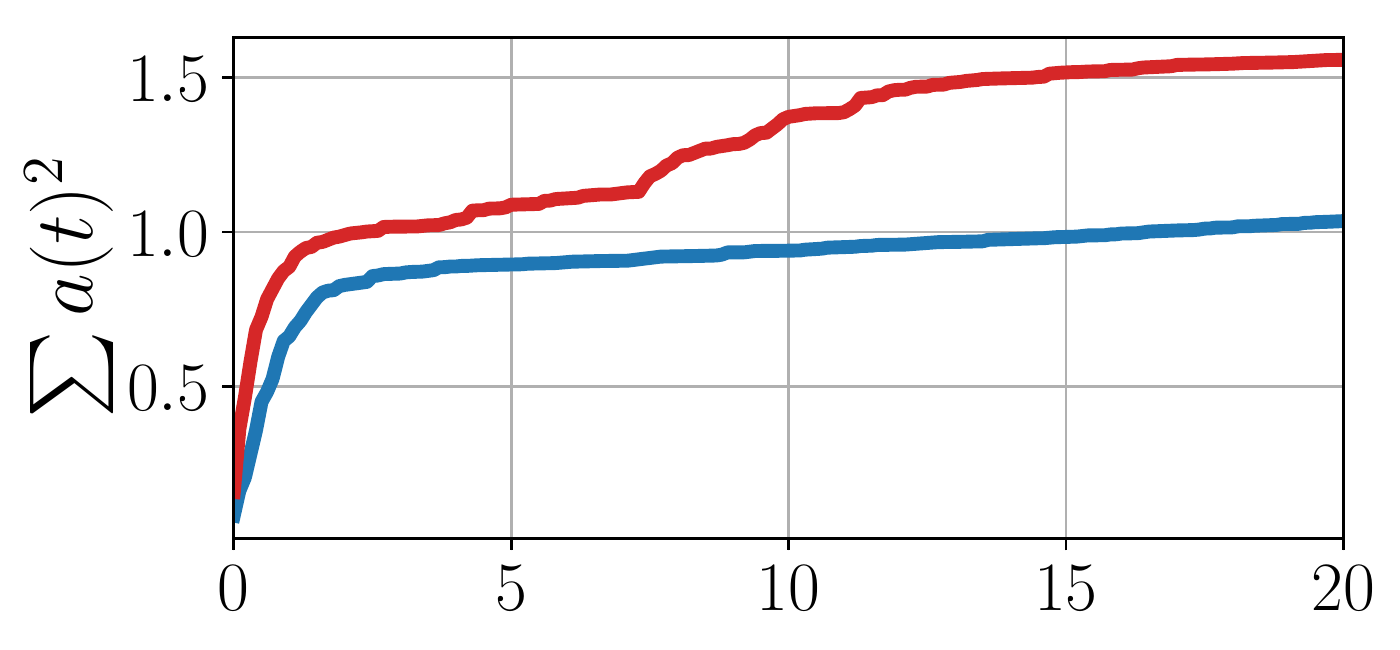}
    \caption{Cumulative sum of squared accelerations experienced by the car in time (seconds), a measure of ride-quality. Blue: Our CE ARMPC. Red: Benchmark ARMPC }
    \label{fig:ride-q}
\end{figure}



\bibliographystyle{unsrt} 
\bibliography{strings,references} 

\begin{IEEEbiography}[{\includegraphics[width=1in,height=1.25in,clip,keepaspectratio]{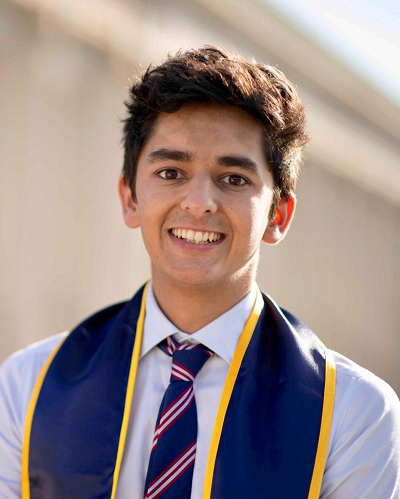}}]{Rohan Sinha} is a Ph.D.~candidate in the Autonomous Systems Lab at Stanford University. He received a B.S. in Mechanical Engineering and B.A. in Computer Science in 2020, both from the University of California, Berkeley. Rohan's research interests lie at the intersection of control theory, machine learning, and applied robotics. Currently, his research focuses on developing learning-based control algorithms with safety guarantees.
\end{IEEEbiography}
\begin{IEEEbiography}[{\includegraphics[width=1in,height=1.25in,clip,keepaspectratio]{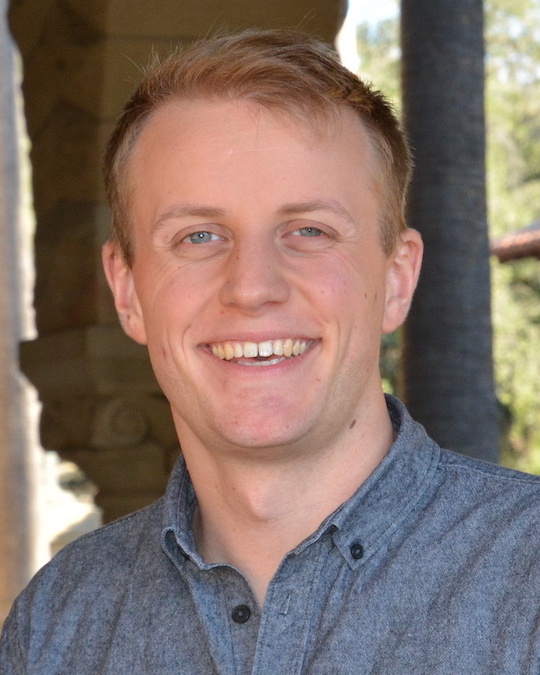}}]{James Harrison}
is a Ph.D.~candidate in the Autonomous Systems Lab at Stanford University. He received an M.S.~degree from Stanford University in 2018 and a B.Eng.~degree from McGill University in 2015, both in mechanical engineering. His research interests include few-shot, adaptive, and open-world learning, Bayesian deep learning, and applications in safe robot autonomy, decision-making, and control. 
\end{IEEEbiography}
\begin{IEEEbiography}[{\includegraphics[width=1in,height=1.25in,clip,keepaspectratio]{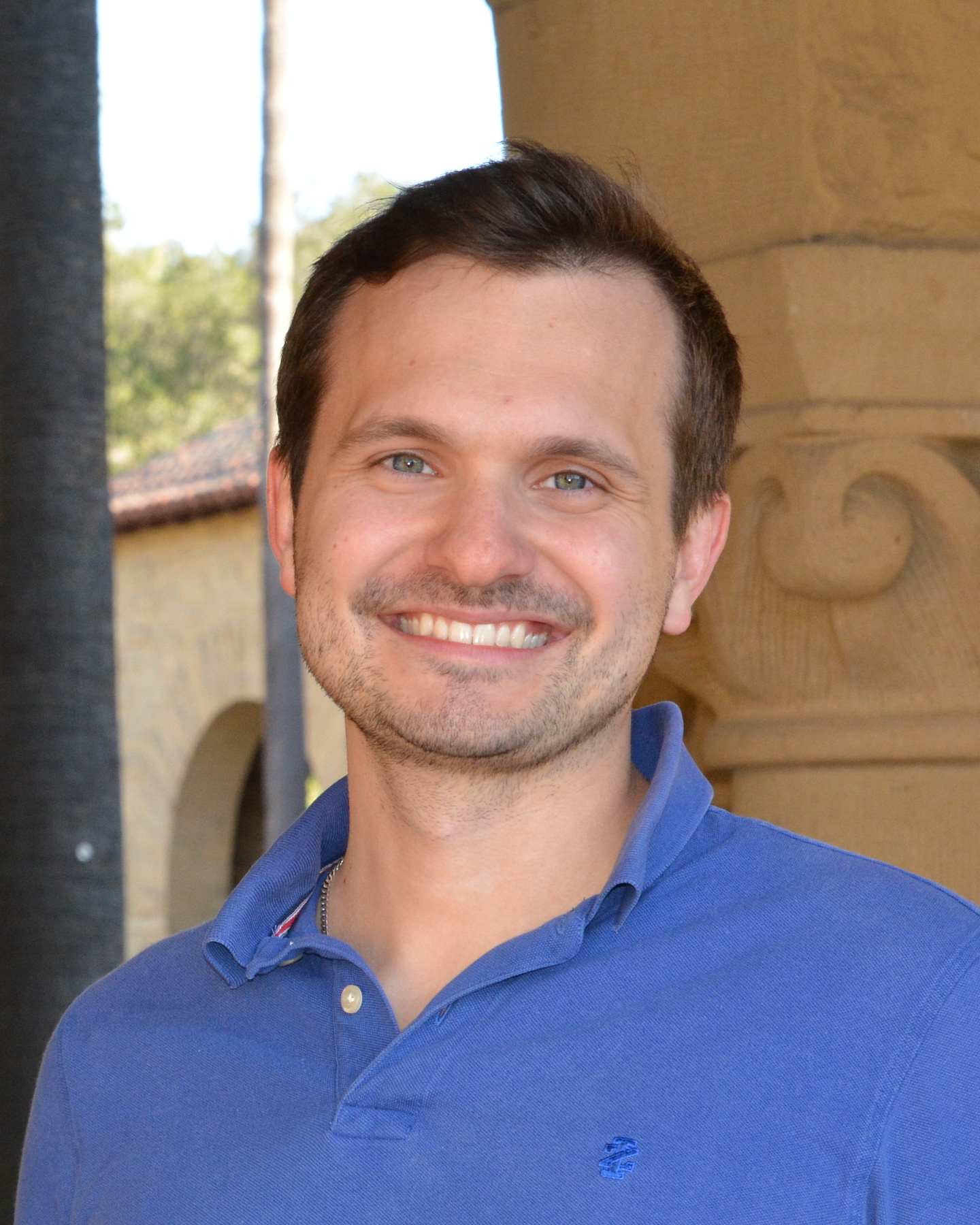}}]{Spencer M. Richards}
is a Ph.D.~candidate in the Autonomous Systems Lab at Stanford University. He received an M.Sc.~degree in Robotics, Systems, and Control from ETH Z\"{u}rich in 2018, and a B.A.Sc.~degree in Engineering Science from the University of Toronto in 2016. His research interests lie at the intersection of control theory and machine learning for robotics, where he works to blend ideas such as adaptive control with meta learning, and stability theory with reinforcement learning.
\end{IEEEbiography}
\begin{IEEEbiography}[{\includegraphics[width=1in,height=1.25in,clip,keepaspectratio]{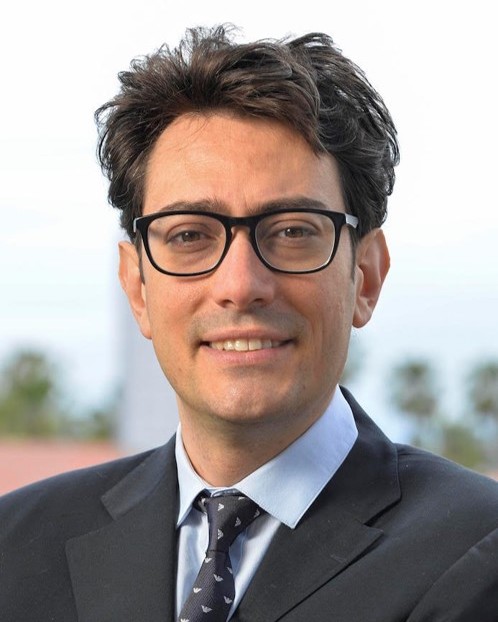}}]{Marco Pavone} 
is   an   Associate Professor of Aeronautics and Astronautics at Stanford University,  where he is the Director of the Autonomous Systems Laboratory. Before  joining  Stanford, he  was  a  Research  Technologist  within the  Robotics  Section  at  the  NASA  Jet Propulsion  Laboratory.   He  received  a Ph.D. degree in Aeronautics and Astronautics from the Massachusetts Institute of  Technology  in  2010.   His  main  research  interests  are  in the  development  of  methodologies  for  the  analysis,  design, and  control  of  autonomous  systems,  with  an  emphasis  on self-driving cars, autonomous aerospace vehicles, and future mobility systems.  
He is a recipient of a number of awards, including  a  Presidential  Early  Career  Award  for  Scientists and Engineers, an ONR YIP Award, an NSF CAREER Award, and a NASA Early Career Faculty Award.  He was identified by the American Society for Engineering Education (ASEE) as one of America’s 20 most highly promising investigators under the age of 40.  He is currently serving as an Associate Editor for the IEEE Control Systems Magazine.
\end{IEEEbiography}

\end{document}